\newcommand{\bN} { {\mathbb{N}}}
\newcommand{\bC} { {\mathbb{C}}}
\newcommand{\bQ} { {\mathbb{Q}}}
\newcommand{\bZ} { {\mathbb{Z}}}
\newcommand{\cont}{\operatorname{Cont}}
\newcommand{\lc}{\operatorname{lc}}
\newcommand{\rrem}{\operatorname{rrem}}
\newcommand{\si} { {\sigma}}
\newcommand{\pa}{\partial}
\newcommand{\ie}{{\it i.e.}}
\newtheorem{theorem}{Theorem}[section]
\newtheorem{lemma}[theorem]{Lemma}
\newtheorem{definition}[theorem]{Definition}
\newtheorem{prop}[theorem]{Proposition}
\newtheorem{corollary}[theorem]{Corollary}
\newtheorem{example}[theorem]{Example}
\newtheorem{algo}[theorem]{Algorithm}
\newfont{\mycrnotice}{ptmr8t at 7pt}
\newfont{\myconfname}{ptmri8t at 7pt}
\begin{document}

\allowdisplaybreaks

\title{Contraction of Ore Ideals with Applications}

\numberofauthors{1}

\author{%
 \alignauthor
 \leavevmode
 \mathstrut Yi Zhang\titlenote{Supported by the Austrian Science Fund (FWF) grants Y464-N18, NSFC grants (91118001, 60821002/F02)
 and a 973 project (2011CB302401).\vspace{-4pt}}\\[\smallskipamount]
  \affaddr{\leavevmode\mathstrut Institute for Algebra, Johannes Kepler University, Linz A-4040, Austria}\\
  \affaddr{\leavevmode\mathstrut KLMM, AMSS, Chinese Academy of Sciences, Beijing 100190, China}\\
  \affaddr{\leavevmode\mathstrut zhangy@amss.ac.cn}
}

\maketitle
\begin{abstract}
Ore operators form a common algebraic abstraction of linear ordinary differential and recurrence equations.
Given an Ore operator~$L$ with polynomial coefficients in~$x$, it generates a left ideal~$I$ in the Ore algebra
over the field~$\mathbf{k}(x)$ of rational functions. We present an algorithm for computing a basis of the contraction ideal of~$I$
in the Ore algebra over the ring~$R[x]$ of polynomials, where~$R$ may be either~$\mathbf{k}$ or a domain with~$\mathbf{k}$ as its fraction field.
This algorithm is based on recent work on desingularization for Ore operators by Chen, Jaroschek, Kauers and Singer.
Using a basis of the contraction ideal,
we compute a completely desingularized operator for~$L$ whose leading coefficient not only
has minimal degree in~$x$ but also has minimal content. Completely desingularized operators have interesting applications
such as certifying integer sequences and checking special cases of a conjecture of Krattenthaler.
\end{abstract}


\category{I.1.2}{Computing Methodologies}{Symbolic and Algebraic Manipulation}[Algorithms]


\terms{Algorithms, Theory}


\keywords{Ore Algebra, Desingularization, Contraction, Syzygy}

\section{Introduction}\label{SECT:intro}
There are various reasons why linear differential equations are
easier than non-linear ones. One is of course that the solutions
of linear differential equations form a vector space over the
underlying field of constants. Another important feature concerns
the singularities. While for a nonlinear differential equation
the location of the singularity may depend continuously on the
initial value, this is not possible for linear
equations. Instead, a solution $f$ of a differential equation
\[
  a_0(x)f(x) + \cdots + a_r(x)f^{(r)}(x) = 0,
\]
where $a_0,\dots,a_r$ are some analytic functions, can only have
singularities at points $\xi \in \bC$ with $a_r(\xi)=0$.

In this article, we consider the case where $a_0,\dots,a_r$ are
polynomials. In this case, $a_r$ can have only finitely many
roots.  We shall also consider the case of recurrence equations
\[
  a_0(n)f(n) + \cdots + a_r(n)f(n+r)=0,
\]
where again there is a strong connection between the roots of
$a_r$ and the singularities of a solution.

While every singularity of a solution leaves a trace in the
leading coefficient of an equation, the converse is not true. In
general, the leading coefficient $a_r$ may have roots at a point where no solution is singular. Such points are called
apparent singularities, and it is sometimes of interest to
identify them. One technique for doing so is called desingularization.
As an example, consider the recurrence operator
\[
 L = (1 + 16 n)^2 \pa^2 - (224 + 512 n) \pa - (1 + n)(17 + 16 n)^2,
\]
which is taken from~\cite[Section 4.1]{Abramov2006}.
Here, $\pa$ denotes the shift operator $f(n) \mapsto f(n + 1)$.
For any choice of two initial values
$u_0,u_1\in  \bQ$, there is a unique sequence $u \colon
\bN \to \bQ$ with $u(0)=u_0$, $u(1)=u_1$ and $L$ applied to $u$
gives the zero sequence. A priori, it is not obvious whether or
not $u$ is actually an integer sequence, if we choose $u_0,u_1$
from~$\bZ$, because the calculation of the $(n+2)$nd term
from the earlier terms via the recurrence encoded by $L$ requires
a division by $(1+16n)^2$, which could introduce fractions. In order
to show that this division never introduces a denominator, the
authors of~\cite{Abramov2006} note that every solution of $L$ is also a solution
of its left multiple
\begin{eqnarray*}
  T & = & \left( \frac{64}{(17 + 16 n)^2} \pa + \frac{(23 + 16 n)(25 + 16 n)}{(17 + 16 n)^2} \right) L \\
    & = & 64 \pa^3 + (16 n + 23) (16 n - 7) \pa^2 - (576 n + 928) \pa\\
    &   & - (16 n + 23) (16 n + 25)(n + 1).
\end{eqnarray*}
The operator $T$ has the interesting property that the factor
$(1+16n)^2$ has been ``removed'' from the leading
coefficient. This is, however, not quite enough to complete the
proof, because now a denominator could still arise from
the division by $64$ at each calculation of a new term via~$T$. To
complete the proof, the authors show that the potential denominators
introduced by $(1+16n)^2$ and by $64$, respectively, are in
conflict with each other, and therefore no such denominators can
occur at all.

The process of obtaining the operator $T$ from $L$ is called
desingularization, because there is a polynomial factor in the
leading coefficient of $L$ which does not appear in the leading
coefficient of~$T$. In the example above, the price to be paid
for the desingularization was a new constant factor $64$ which
appears in the leading coefficient of $T$ but not in the original
leading coefficient of~$L$. Desingularization algorithms in the
literature~\cite{Abramov1999, Abramov2006, Barkatou2015, Chen2013, Chen2016}
care only about the removal of polynomial
factors without introducing new polynomial factors, but they do not
consider the possible introduction of new constant factors.
A contribution of the present paper is a desingularization
algorithm which minimizes, in a sense, also any constant factors
introduced during the desingularization. For example, for the
operator $L$ above, our algorithm finds the alternative desingularization
\begin{equation} \label{EQ:ah}
\begin{array}{ccl}
\tilde{T}  & = & 1\pa^3 -\left(6272 n^3+3976 n^2+420 n+15\right) \pa^2 + \\
          &   & \left(12544 n^2+11871 n + 2782\right) \pa + 6272 n^4+ \\
          &   & 22792 n^3+30380 n^2 + 17459 n+3599,
\end{array}
\end{equation}
which immediately certifies the integrality of its solutions.

In more algebraic terms, we consider the following problem. Given
an operator $L\in \bZ[x][\pa]$, where $\bZ[x][\pa]$
is an Ore algebra (see Section~\ref{SECT:PRE} for definitions), we consider
the left ideal $\langle L \rangle = \bQ(x)[\pa]L$ generated by $L$ in the
extended algebra $\bQ(x)[\pa]$. The contraction of $\langle L \rangle$ to
$\bZ[x][\pa]$ is defined as $\cont(L) := \langle L \rangle \cap
\bZ[x][\pa]$. This is a left ideal of $\bZ[x][\pa]$ which
contains $\bZ[x][\pa]L$, but in general more operators.
Our goal is to compute a $\bZ[x][\pa]$-basis of $\cont(L)$.
In the example above, such a basis is given by $\{L, \tilde T\}$ (see Example~\ref{EX:ah}).
The traditional desingularization problem corresponds to computing
a basis of the $\bQ[x][\pa]$-left ideal $\langle L \rangle \cap \bQ[x][\pa]$.

The contraction problem for Ore algebras $\bQ[x][\pa]$ was proposed by Chyzak and Salvy~\cite[Section 4.3]{Salvy1998}.
For the analogous problem in
commutative polynomial rings, there is a standard solution via
Gr\"obner bases~\cite[Section 8.7]{Weispfenning1993}. It reduces the contraction problem to
a saturation problem. This reduction also works for the differential
case, but in that case it is not so helpful because it is less obvious
how to solve the saturation problem. A solution was proposed by
Tsai~\cite{Tsai2000}, which involves homological algebra and D-modules theory.
Our work is based on desingularization for Ore operators in~\cite{Chen2013,Chen2016}.
In particular, the $p$-removing operator in~\cite[Lemma~4]{Chen2016} provides us with a key
to determine contraction ideals. The algorithm developed in this paper is
considerably simpler than Tsai's and at the same time it applies to
arbitrary Ore algebras rather than only the differential case.
Moreover, we compute a completely desingularized operator in a contraction ideal, which
has minimal leading coefficient in terms of both degree and content.

The rest of this paper is organized as follows. In Section~\ref{SECT:PRE},
we describe Ore polynomial rings over a principal ideal domain, and extend the notion
of $p$-removed operators to them.  The notion of desingularized operators is defined and connected
with contraction ideals in Section~\ref{SECT:dc}. We determine a contraction ideal
in Section~\ref{SECT:algo}, and compute completely desingularized operators
in Section~\ref{SECT:cd}.

\section{Preliminaries}\label{SECT:PRE}

This section is divided into three parts. First, we describe Ore algebras that are used in the paper.
Second, we extend the notion of $p$-removed operators in~\cite{Chen2013,Chen2016}. At last, we
make some remarks  on Gr\"obner basis computation over a principal ideal domain.

\subsection{Ore Algebra}
Throughout the paper, we let $R$ be a principal ideal domain. For instance, $R$ can be
the ring of integers or that of univariate polynomials over a field.
We consider the \emph{Ore algebra}
$R[x][\pa; \sigma, \delta]$, where $\sigma: R[x] \rightarrow R[x]$ is a ring automorphism that leaves the
elements of $R$ fixed, and
$\delta: R[x] \rightarrow R[x]$ is a $\sigma$-derivation, \ie ~an $R$-linear map satisfying the skew Leibniz rule
$$\delta(fg) = \sigma(f)\delta(g) + \delta(f)g \quad \text{ for } f, g \in R[x].$$
The addition in $R[x][\pa]$ is coefficient-wise and the
multiplication is defined by associativity via the commutation rule~$\pa p = \sigma(p) \pa + \delta(p)$
for~$p \in R[x].$

Given $L \in R[x][\pa]$, we can uniquely write it as
\[
 L = \ell_r \pa^r + \ell_{r-1} \pa^{r-1} + \cdots + \ell_0
\]
with $\ell_0, \ldots, \ell_r \in R[x]$ and $\ell_r \neq 0$.
We call~$r$  the \emph{order} and~$\ell_r$ ~the \emph{leading coefficient} of $L$.
They are denoted by~$\deg_{\pa}(L)$ and~$\lc_{\pa}(L)$, respectively.
The ring~$R[x][\pa; \sigma, \delta]$ is abbreviated as~$R[x][\pa]$ when~$\sigma$ and~$\delta$ are clear from the context. For a subset~$S$ of~$R[x][\pa]$, the left ideal generated by~$S$ is denoted by~$R[x][\pa]\cdot S$.

Let~$Q_R$ be the quotient field of~$R$. Then~$Q_{R}(x)[\pa]$ is an Ore algebra containing~$R[x][\pa]$.
For~$L \in R[x][\pa]$,
we define the \emph{contraction ideal} of~$L$ to be~$Q_{R}(x)[\pa] L \cap R[x][\pa]$ and denote it by~$\cont(L)$.
\subsection{Removability} \label{SUBSECT:remove}
We generalize some terminologies given in~\cite{Chen2013, Chen2016} by replacing the coefficient ring~$\mathbb{K}[x]$ with~$R[x]$, where~$\mathbb{K}$ is a field.

\begin{definition}\label{DEF:premovable}
Let $L \in R[x][\pa]$ with positive order, and~$p$ be a divisor of~$\lc_{\pa}(L)$ in~$R[x]$.
\begin{itemize}
 \item[(i)] We say that $p$ is \emph{removable} from $L$ at
order $n$ if there exist~$P \in Q_{R}(x)[\pa]$ with order~$k$, and~$w, v \in R[x]$
with $\gcd(p, w) = 1$ in~$R[x]$ such that
$$PL \in R[x][\pa] \quad \text{and} \quad \sigma^{-k}(\lc_{\pa}(PL)) = \frac{w}{vp}\lc_{\pa}(L).$$
We call $P$ a \emph{$p$-removing operator for $L$ over~$R[x]$}, and $PL$ the corresponding \emph{$p$-removed operator}.
\item[(ii)] $p$ is simply called \emph{removable} from $L$ if it is removable at order $k$ for some $k \in \bN$. Otherwise,
$p$ is called \emph{non-removable} from $L$.
\end{itemize}
\end{definition}
Note that every $p$-removed operator lies in $\cont(L)$.

\begin{example}
In the example of Section~\ref{SECT:intro}, $(1 + 16n)^2$ is removable from $L$ at order 1.
And $T$ is the corresponding $(1 + 16n)^2$-removed operator for $L$.
\end{example}

\begin{example}
In the differential Ore algebra $\bZ[x][\pa]$, where $\pa x = x \pa + 1$,~let~$L = x (x - 1) \pa - 1$
Then $(1 {-} x) \pa^2 {-} 2 \pa {=} \left(\frac{1}{x}\pa \right) L$ is an $x$-removed operator for $L$
(see~\cite[Example 3]{Chen2013}).
\end{example}

The authors of~\cite{Chen2013} provide a convenient form of $p$-removing operators over~$\mathbb{K}[x]$ in order to get the order bound.
We derive a similar form over~$R[x]$ and use it in Section~\ref{SECT:cd}.

\begin{lemma}\label{premovable}
Let~$L \in R[x][\pa]$ with positive order. Assume that~$p \in R[x]$ is removable from~$L$ at order~$k$.
Then there exists a~$p$-removing operator for~$L$ over~$R[x]$ in the form
\[
  \frac{p_0}{\sigma^{k}(p)^{d_0}} + \frac{p_1}{\sigma^{k}(p)^{d_1}} \pa + \cdots
  + \frac{p_{k}}{\sigma^{k}(p)^{d_{k}}} \pa^{k},
\]
where~$p_i$ belongs to~$R[x]$, $\gcd(p_i, \sigma^{k}(p)) = 1$ in $R[x]$,~$i = 0,$ $1,$ \ldots, $k$, and~$d_k \geq 1$.
\end{lemma}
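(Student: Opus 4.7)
My plan is to start from an arbitrary $p$-removing operator $P = \sum_{i=0}^{k} c_i \pa^i \in Q_R(x)[\pa]$ for $L$ of order $k$, and reshape it into the required form by a single left multiplication with a scalar from $R[x]$ that is coprime to $\sigma^k(p)$.

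Since $R$ is a PID, the ring $R[x]$ is a UFD, so each $c_i$ can be written in lowest terms as $c_i = u_i/v_i$ with $u_i, v_i \in R[x]$ and $\gcd(u_i, v_i)$ a unit. Next I would split off the $\sigma^k(p)$-part of every denominator, writing $v_i = \sigma^k(p)^{d_i} t_i$ with $d_i \in \bN$ and $\gcd(t_i, \sigma^k(p)) = 1$; the coprimality of $u_i$ with $v_i$ then forces $\gcd(u_i, \sigma^k(p)) = 1$. Letting $q := \mathrm{lcm}(t_0, \ldots, t_k) \in R[x]$, which is itself coprime to $\sigma^k(p)$, left multiplication by $q$ distributes coefficient-wise over the basis $\{\pa^i\}$ to produce
\[
  qP \;=\; \sum_{i=0}^{k} \frac{u_i\,(q/t_i)}{\sigma^k(p)^{d_i}} \pa^i.
\]
Setting $p_i := u_i(q/t_i) \in R[x]$, each numerator is coprime to $\sigma^k(p)$, so this is already the normal form demanded by the lemma.

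Two things remain to check. First, $qP$ is still a $p$-removing operator for $L$: by associativity $(qP)L = q(PL) \in R[x][\pa]$, its leading $\pa$-coefficient is $q \cdot \lc_\pa(PL)$, and applying $\sigma^{-k}$ yields $\sigma^{-k}(q) (w/(vp)) \lc_\pa(L)$; writing $w' := \sigma^{-k}(q) w \in R[x]$, the required coprimality $\gcd(p, w') = 1$ follows from $\gcd(p, w) = 1$ together with $\gcd(\sigma^k(p), q) = 1$, carried across by the automorphism $\sigma^{-k}$. Second, $d_k \geq 1$: from $\sigma^{-k}(c_k) = w/(vp)$ I get $c_k = \sigma^k(w)/(\sigma^k(v) \sigma^k(p))$, and because $\gcd(\sigma^k(w), \sigma^k(p)) = 1$, the factor $\sigma^k(p)$ must appear in the reduced denominator $v_k$, so $d_k \geq 1$; left multiplication by the $\sigma^k(p)$-coprime element $q$ preserves this.

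The only real obstacle is bookkeeping in $R[x]$: verifying that lowest-terms representation, the $\sigma^k(p)$-splitting of denominators, and the formation of the LCM $q$ all interact cleanly, and that nothing implicitly uses $R$ being a field. Since $R[x]$ is a UFD whenever $R$ is a PID, each of these operations is well defined up to units, so this part should be routine rather than conceptually deep.
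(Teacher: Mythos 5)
You follow the same route as the paper: split each coefficient's denominator into a power of $\sigma^k(p)$ times a $\sigma^k(p)$-coprime cofactor, clear all the cofactors by a single left scalar multiplication by an element of $R[x]$ coprime to $\sigma^k(p)$ (you take the lcm of the $t_i$, the paper takes the product $\prod_i q_i$ --- the same idea, only cosmetically different), and check the result is still a $p$-removing operator; your verification of $\gcd(p,w')=1$ and of $d_k\geq 1$ is if anything more explicit than the paper's, which simply asserts these. One small slip worth flagging: the step ``$\gcd(u_i,v_i)=1$ forces $\gcd(u_i,\sigma^k(p))=1$'' is only valid when $d_i\geq 1$; if $d_i=0$ (so $\sigma^k(p)\nmid v_i$) then $u_i$ can perfectly well contain a factor of $\sigma^k(p)$, and then $p_i=u_i(q/t_i)$ would \emph{not} satisfy $\gcd(p_i,\sigma^k(p))=1$. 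The paper's proof has the same implicit leap, and only the $i=k$ coefficient (where $d_k\geq 1$ indeed holds) is used downstream, so the gap is harmless here --- but a fully honest version should either allow $d_i$ to range over $\bZ$ or weaken the coprimality claim for indices with $d_i=0$.
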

\begin{proof}
By~Definition~\ref{DEF:premovable},
$\lc_\pa(P) = \sigma^k \left(w/(vp)\right)$
for some~$w,v$ in~$R[x]$ with~$\gcd(w,p)=1$.  Then we can write a~$p$-removing operator for~$L$ over~$R[x]$ in the form
\[ P = \frac{p_0}{q_0 \sigma^{k}(p)^{d_0}} + \frac{p_1}{q_1 \sigma^{k}(p)^{d_1}} \pa + \cdots
  + \frac{p_{k}}{q_k \sigma^{k}(p)^{d_{k}}} \pa^{k},
\]
where~$p_i, q_i \in R[x]$,~$\gcd(p_i q_i, \sigma^{k}(p)) = 1$ in~$R[x]$,~$i = 0, \ldots, k$, $d_k \geq 1$.
Let~$\tilde{P} = \left( \prod_{i = 0}^k q_i \right) P$,~$\tilde{p_i} = p_i \left( \prod_{i = 0}^k q_i \right) / q_i$,
$i = 0, \ldots, k$. Then
\[ \tilde{P} = \frac{\tilde{p}_0}{\sigma^{k}(p)^{d_0}} + \frac{\tilde{p}_1}{\sigma^{k}(p)^{d_1}} \pa + \cdots
  + \frac{\tilde{p}_{k}}{\sigma^{k}(p)^{d_{k}}} \pa^{k},
\]
where~$\gcd(\tilde{p}_i, \sigma^{k}(p)) = 1$ in~$R[x]$,~$i = 0, \ldots, k$. Moreover,
$$\si^{-k}(\lc_\pa(\tilde{P} L)) = \frac{\si^{-k}(\tilde{p}_k)}{p^{d_k}} \lc_\pa(L).$$
By~Definition~\ref{DEF:premovable}, $\tilde{P}$ is a $p$-removing operator for~$L$ over~$R[x]$ with the required form.
\end{proof}
\subsection{Gr\"obner bases} \label{SUBSECT:gb}

In Sections~\ref{SECT:algo} and~\ref{SECT:cd}, we will make essential use of Gr\"obner bases in~$R[x][\pa]$.
When~$R=\mathbf{k}[t]$ with~$\mathbf{k}$ being a field, the notion of Gr\"obner bases and Buchberger's algorithm are available~\cite{Weispfenning1990}.
In our case, $\sigma$ is an $R$-automorphism of~$R[x]$, which implies that~$\sigma(x) = a x + b$
where~$a, b$ are in~$R$ and~$a$ is a unit.
Assume that~$\prec$ is a term order on~$\left\{x^i \pa^j \mid i, j \in \bN \right\}$. Let~$P$ be a nonzero operator in~$R[x][\pa]$,
and~$c$ be the head coefficient of~$P$ with respect to~$\prec$. By the commutation rule,~$\pa^i P$ has head coefficient~$c a^i$,
which is associated to~$c$, because~$a^i$ is a unit. This observation enables us to extend
the notion of Gr\"obner bases and Buchberger's algorithm in~\cite{Weispfenning1993, Saito1999} to Ore case in a straightforward way.
%
\section{Desingularization and \\ contraction}\label{SECT:dc}

In this section, we define the notion of desingularized operators, and connect it with contraction ideals.
As a matter of notation,
for an operator~$L \in R[x][\pa]$, we set
\[   M_k(L) = \left\{  P \in \cont(L) \mid  \deg_\pa(P) \le k \right\}. \]
Note that~$M_k(L)$ is a left submodule of~$\cont(L)$ over~$R[x]$. We call it the $k$th submodule
of~$\cont(L)$. When the operator~$L$ is clear from context, $M_k(L)$ is simply denoted by~$M_k$.
%
%
\begin{definition}\label{DEF:desingularization}
Let $L \in R[x][\pa]$ with order~$r > 0$, and
\begin{equation} \label{EQ:factor}
\lc_{\pa}(L) = c p_1^{e_1} \cdots p_m^{e_m},
\end{equation}
where~$c \in R$ and~$p_1,$ \ldots, $p_m \in R[x] \setminus R$ are irreducible and pairwise coprime.
An operator~$T \in R[x][\pa]$ of order~$k$ is called a {\em desingularized operator for $L$}
if $T \in \cont(L)$ and
\begin{equation} \label{EQ:dop}
\si^{r - k}(\lc_{\pa}(T)) = \frac{a}{b p_1^{k_1} \cdots p_m^{k_m}} \lc_{\pa}(L) ,
\end{equation}
 where $a, b \in R$, and~$p_i^{d_i}$ is non-removable from $L$ for each $d_i > k_i$, $i = 1 \ldots m$.
\end{definition}

Desingularized operators always exist by~\cite[Lemma 4]{Chen2016}.

\begin{lemma}\label{LM:key}
Let~$L \in R[x][\pa]$ be of~order~$r > 0$, and~$k \in \bN$ with~$k \geq r$.
Assume that~$T$ is a desingularized operator for~$L$ and $\deg_{\pa}(T) = k$.
\begin{itemize}
\item[(i)] $\deg_{x}(\lc_{\pa}(T)) = \min \{  \deg_{x}(\lc_{\pa}(Q)) \mid Q \in M_k(L) \setminus \{0\}\}.$
\item[(ii)] $\pa^i T$ is a desingularized operator for $L$
for each $i \in \bN$.
\item[(iii)] Set~$\lc_{\pa}(T) = a g$, where~$a \in R$ and~$g \in R[x]$ is primitive.
Then, for all~$F \in \cont(L)$ of order~$j$ with~$j \ge k$, $\si^{j - k}(g)$ divides $\lc_{\pa}(F)$ in~$R[x]$.
\end{itemize}
\end{lemma}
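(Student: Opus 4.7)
I would prove (ii) first by direct verification, then (iii) by contradiction using the non-removability clause, and obtain (i) as a corollary of (iii). Part (ii) is immediate: since $\cont(L)$ is a left ideal and the commutation rule gives $\lc_\pa(\pa^i T) = \si^i(\lc_\pa(T))$,
$$\si^{r-(k+i)}(\lc_\pa(\pa^i T)) = \si^{r-k}(\lc_\pa(T)) = \frac{a}{b\, p_1^{k_1}\cdots p_m^{k_m}}\lc_\pa(L),$$
which is exactly the form required by~(\ref{EQ:dop}); the non-removability clause is a property of $L$ alone, so it is inherited.

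For (iii), invoke (ii) to replace $T$ by $\pa^{j-k}T$: this operator is also desingularized, of order $j$, with leading coefficient $a\si^{j-k}(g)$ whose primitive part is $\si^{j-k}(g)$ (the $R[x]$-automorphism $\si$ preserves primitivity). It therefore suffices to prove (iii) when $\deg_\pa(F) = k$. Rearranging~(\ref{EQ:dop}) in that case yields
$$\si^{r-k}(g) = (c/b)\prod_{\ell=1}^{m} p_\ell^{e_\ell - k_\ell}.$$
Both sides lie in $R[x]$ and are primitive: the $p_\ell \in R[x]\setminus R$ are primitive as irreducibles of positive $x$-degree in the UFD $R[x]$, and $\si^{r-k}(g)$ inherits primitivity from $g$. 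Gauss's lemma therefore forces $c/b \in R^{\times}$. Applying $\si^{k-r}$, $g$ equals $\prod_\ell \si^{k-r}(p_\ell)^{e_\ell - k_\ell}$ up to a unit, so the irreducible factors of $g$ in $R[x]$ are precisely the $\si^{k-r}(p_\ell)$ with $e_\ell > k_\ell$, each to multiplicity $e_\ell - k_\ell$.

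Now suppose for contradiction that $g \nmid \lc_\pa(F)$. Then some $i$ has $e_i > k_i$ while the multiplicity $f_i$ of $\si^{k-r}(p_i)$ in $\lc_\pa(F)$ is strictly less than $e_i - k_i$. Consider $T + F \in \cont(L) \cap M_k(L)$. Its leading coefficient $ag + \lc_\pa(F)$ is nonzero, for otherwise $\lc_\pa(F) = -ag$ would be divisible by $g$, contrary to assumption. Since the $\si^{k-r}(p_i)$-valuations of $ag$ and $\lc_\pa(F)$ differ ($e_i - k_i$ versus $f_i$), the valuation of their sum is exactly $f_i$. Writing $T + F = P'L$ with $P' \in Q_R(x)[\pa]$ of order $k - r$ then gives
$$\si^{-(k-r)}(\lc_\pa(P'L)) = \frac{w}{v\, p_i^{e_i - f_i}}\lc_\pa(L),$$
with $\gcd(w, p_i) = 1$ and $e_i - f_i > k_i$. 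By Definition~\ref{DEF:premovable}, this says $p_i^{e_i - f_i}$ is removable from $L$, contradicting the non-removability clause of the desingularized property of $T$.

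Part (i) then drops out. For any $Q \in M_k(L)\setminus\{0\}$ of order $s \le k$, the shift $\pa^{k-s}Q \in M_k(L)$ has order $k$ and leading coefficient $\si^{k-s}(\lc_\pa(Q))$ of the same $x$-degree (since $\si(x) = ax+b$ with $a \in R^{\times}$ preserves $x$-degree); by (iii), $g$ divides this leading coefficient, so $\deg_x(\lc_\pa(T)) = \deg_x(g) \le \deg_x(\lc_\pa(Q))$, with equality at $Q = T$. The main obstacle lies in (iii): one must pin down the exact factorization of $g$ via the primitivity argument, and then arrange $T + F$ so that the $\si^{k-r}(p_i)$-valuations do not cancel, yielding a removing operator that violates~$T$'s non-removability clause.
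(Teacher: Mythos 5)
Your proposal is correct, but it is organized differently from the paper's proof. The paper proves (i) first: it takes a minimal-degree $Q\in M_k(L)$, pseudo-divides $\lc_\pa(T)$ by $\lc_\pa(Q)$, uses minimality to kill the remainder, and then reads off from \eqref{EQ:dop} that a nontrivial polynomial quotient would make some $p_i^{k_i+1}$ removable; part (iii) is then dispatched as ``a similar argument'' plus Gauss's lemma, and (ii) is immediate. You reverse the logical order: you first pin down $g$ up to a unit as $\prod_\ell \si^{k-r}(p_\ell)^{e_\ell-k_\ell}$ (combining \eqref{EQ:factor}, \eqref{EQ:dop} and Gauss's lemma — this is exactly the fact the paper only records later in \eqref{EQ:cp}), prove (iii) by a valuation comparison on $T+F$ that manufactures a $p_i^{e_i-f_i}$-removing operator with $e_i-f_i>k_i$, contradicting Definition~\ref{DEF:desingularization}, and then obtain (i) as a corollary of (iii). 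Both routes run on the same engine — exhibit an element of $\cont(L)$ whose leading coefficient certifies removability of $p_i^{d}$ with $d>k_i$ — but yours trades the pseudo-division/degree-minimality step for an explicit factorization and a valuation argument, which makes (iii) self-contained instead of resting on (i); the paper's order makes (i) the workhorse and leaves (iii) terse. One cosmetic point: the constant in your display for $\si^{r-k}(g)$ should be the numerator constant of \eqref{EQ:dop} times $c$, divided by $b$ and by the content of $\lc_\pa(T)$, not just $c/b$; this is harmless, since primitivity of both sides forces the whole constant to be a unit of $R$, which is all you use.
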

\begin{proof}
(i) Let $ t = \lc_{\pa}(T)$ and
$$d = \min \{\! \deg_{x}(\lc_{\pa}(Q))\! \mid Q \in M_k(L) \setminus \{0\} \}.$$
Suppose that~$d < \deg_{x}(t)$.
Then there exists $Q \in \cont(L)$ with~$\deg_{x}(\lc_{\pa}(Q)) = d$. Without loss of
generality, we can assume that~$\deg_{\pa}(Q) = k$, because the leading coefficients of~$Q$ and~$\pa^i Q$
are of the same degree for all~$i \in \bN$.

By pseudo-division in~$R[x]$, we have that
$$s t = q \lc_\pa(Q) + h$$
for some~$s {\in} R {\setminus} \{0\}$, $q, h {\in} R[x]$,
and~$h=0$ or~$\deg_x(h) < d$. If~$h$ were nonzero, then~$s T - q Q$ would be a nonzero operator of order~$k$ in~$\cont(L)$
whose leading coefficient is of degree less than~$d$, a contradiction. Thus,~$s t = q \lc_{\pa}(Q)$.
In particular,~$\deg_x(q)$ is positive, as~$d < \deg_{x}(t)$.
It follows from~\eqref{EQ:dop} that
$$\si^{r -k}(\lc_{\pa}(Q)) = \si^{r-k} \left( \frac{s t}{q}\right) = \frac{s a}{\sigma^{r-k}(q) b p_1^{k_1} \cdots p_m^{k_m}} \lc_{\pa}(L),$$
which belongs to~$R[x]$.  Hence,~$\si^{r-k}(q)$ divides~$\lc_{\pa}(L)$ in~$R[x]$. Consequently,~there exists $i \in \{1 \ldots m \}$ such that $p_i$ divides~$\si^{r-k}(q)$ in~$R[x]$. This implies that~$p^{k_i+1}$ is removable from $L$, a contradiction.

(ii) It is immediate from Definition~\ref{DEF:desingularization}.

(iii) Let~$\lc_{\pa}(F) = u f$, where~$u \in R$ and~$f$ is primitive in~$R[x]$.
By~(ii), $\pa^{j - k} T$ is a desingularized operator whose leading coefficient
equals~$a \si^{j - k}(g)$.
A similar argument used in the proof of the first assertion implies that
$$ v f =  p \si^{j - k}(g) \quad
\text{for some~$v \in R \setminus \{0\}$ and~$p \in R[x]$.}$$
By Gauss's Lemma in~$R[x]$ , $\si^{j - k}(g) \mid f$.
\end{proof}

We describe a relation between desingularized operators and contraction ideals.
Let~$I$ be a left ideal in~$R[x][\pa]$, and~$a \in R$. The saturation of~$I$ with respect to~$a$ is
defined to be
\[ I : a^\infty = \left\{ P \in R[x][\pa] \mid a^i P \in I \,\, \text{for some~$i \in \bN$} \right\}. \]
Since~$a$ is a constant with respect to~$\sigma$ and~$\delta$, the saturation~$I : a^\infty$ is a left ideal.

\begin{theorem}\label{TH:dc}
Let~$L \in R[x][\pa]$ with order $r>0$. Assume that~$T$ is a desingularized operator for~$L$.
Let~$\lc_{\pa}(T) {=} a g$, where $a \in R$ and $g$ is primitive in~$R[x]$.
If~$T$ belongs to~$M_k$ for some~$k \in \bN$,
then
$$\cont(L) = ( R[x][\pa] \cdot M_k ) : a^{\infty}.$$
\end{theorem}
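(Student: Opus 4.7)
The plan is to establish both inclusions of the claimed equality.

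The inclusion $(R[x][\pa] \cdot M_k) : a^\infty \subseteq \cont(L)$ is straightforward: since $M_k \subseteq \cont(L)$ and $\cont(L)$ is a left ideal of $R[x][\pa]$, we have $R[x][\pa] \cdot M_k \subseteq \cont(L)$. Given any $P$ in the saturation, $a^i P \in \cont(L) \subseteq Q_R(x)[\pa] L$ for some $i \in \bN$, and since $a \in R \setminus \{0\}$ is invertible in $Q_R(x)$ and commutes with $\pa$ (because $\si$ fixes $R$ and $\de$ vanishes on $R$), we recover $P = a^{-i}(a^i P) \in Q_R(x)[\pa] L$; together with $P \in R[x][\pa]$, this yields $P \in \cont(L)$.

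For the reverse inclusion $\cont(L) \subseteq (R[x][\pa] \cdot M_k) : a^\infty$, I would induct on $j := \deg_\pa(P)$ for $P \in \cont(L)$, aiming to produce some $N \in \bN$ with $a^N P \in R[x][\pa] \cdot M_k$. The base case $j \le k$ is immediate, since then $P \in M_k \subseteq R[x][\pa] \cdot M_k$. For $j > k$, set $k_0 := \deg_\pa(T) \le k$ and apply Lemma~\ref{LM:key}(iii) to $T$: since $j \ge k_0$, we obtain $\si^{j - k_0}(g) \mid \lc_\pa(P)$ in $R[x]$, so write $\lc_\pa(P) = h\, \si^{j - k_0}(g)$ with $h \in R[x]$. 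The commutation rule then gives $\lc_\pa(h \pa^{j - k_0} T) = h \cdot \si^{j - k_0}(ag) = a h\, \si^{j - k_0}(g) = \lc_\pa(aP)$, so $P'' := aP - h \pa^{j - k_0} T$ lies in $\cont(L)$ and has order strictly less than $j$. The inductive hypothesis yields $a^N P'' \in R[x][\pa] \cdot M_k$ for some $N$, and since $T \in M_k$ forces $h \pa^{j - k_0} T \in R[x][\pa] \cdot M_k$, we conclude $a^{N+1} P \in R[x][\pa] \cdot M_k$, as required.

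The sole real obstacle is the inductive step, and it is completely resolved by Lemma~\ref{LM:key}(iii): this is precisely the result guaranteeing the divisibility $\si^{j - k_0}(g) \mid \lc_\pa(P)$ that enables the leading-coefficient cancellation. Given that, the polynomial multiplier $h$ driving the reduction is determined by exact division in $R[x]$, the subtracted correction term visibly lies in $R[x][\pa] \cdot M_k$, and the induction terminates after at most $j - k$ reduction steps.
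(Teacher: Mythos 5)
Your proposal is correct and follows essentially the same route as the paper: both inclusions are handled identically, and the key inductive step is the same leading-coefficient cancellation $aP - h\,\pa^{j-k_0}T$ justified by Lemma~\ref{LM:key}(iii). The only cosmetic differences are that the paper first normalizes $T$ to have order exactly $k$ via Lemma~\ref{LM:key}(ii) and phrases the induction as $M_i \subseteq (R[x][\pa]\cdot M_k):a^\infty$, whereas you keep $T$ at its own order and track the power $a^N$ explicitly.
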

\begin{proof} By Lemma~\ref{LM:key} (ii), we may assume that the order of~$T$ is equal to~$k$.
Set $J = ( R[x][\pa] \cdot M_k ) : a^{\infty}$.

First, assume that~$F \in J$. Then there exists~$j \in \bN$
such that $a^j F \in R[x][\pa] \cdot M_k$. It follows that~$F \in Q_{R}(x)[\pa] L$.
Thus, $F \in \cont(L)$ by definition.

Next, note that~$\cont(L) = \cup_{i = r}^{\infty} M_i$ and that~$M_i \subseteq M_{i + 1}$.
It suffices to show~$M_i \subseteq J \text{ for all } i \geq k$.
We proceed by induction on~$i$.

For~$i = k$. $M_k \subseteq J$ by definition.

Suppose that the claim holds for $i$. For any $F \in M_{i + 1} \backslash M_{i}$,
 $\deg_{\pa}(F) = i + 1$.
By Lemma~\ref{LM:key}~(iii), $\lc_{\pa}(F) = p \si^{i + 1 - k}(g)$
for some $p \in R[x]$. Then~$\lc_{\pa}(a F) = \lc_{\pa}(p \pa^{i + 1 - k} T)$.
It follows that~$a F - p \pa^{i + 1 - k} T \in M_{i}$.
Since
$$p \pa^{i + 1 - k} T \in R[x][\pa] \cdot M_k \subseteq R[x][\pa] \cdot M_i,$$
 we have that
$a F \in R[x][\pa] \cdot M_i$. On the other hand,~$M_i {\subset} J$ by the induction hypothesis.
Thus,~$a F \in R[x][\pa] \cdot J$, which is~$J$. Accordingly,~$F \in J$ by the definition
of saturation.
\end{proof}

\section{An Algorithm for computing \\ contraction ideals} \label{SECT:algo}
First, we translate an upper bound for the order of a desingularized operator over~$Q_R[x]$ to~$R[x]$.
\begin{lemma} \label{LM:bound}
Let~$L {\in} R[x][\pa]$ with order~$r>0$, and $p {\in} R[x]$ be a primitive polynomial and a divisor of~$\lc_{\pa}(L)$.
Assume that there exists a $p$-removing operator for~$L$ over~$Q_R[x]$.
Then there exists $p$-removing operator for~$L$ over~$R[x]$ with order~$r$.
\end{lemma}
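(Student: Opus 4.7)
The plan is to start from a $p$-removing operator $P \in Q_R(x)[\pa]$ for $L$ over $Q_R[x]$ (existing by hypothesis), first normalize its order to $r$, and then descend to $R[x]$ by clearing $R$-denominators. By the order bound for $p$-removing operators established over a field by Chen, Jaroschek, Kauers and Singer~\cite{Chen2016}, applied with the coefficient field $Q_R$, we may assume $\deg_\pa(P)=k\le r$. If $k<r$, replace $P$ by $\pa^{r-k}P$: a direct computation from the commutation rule gives $\sigma^{-r}\bigl(\lc_\pa(\pa^{r-k}PL)\bigr)=\sigma^{-k}\bigl(\lc_\pa(PL)\bigr)$, so the identity $\sigma^{-k}(\lc_\pa(PL))=(w/(vp))\,\lc_\pa(L)$ persists at the new order with the same $w,v$. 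Hence we may take $\deg_\pa(P)=r$ over $Q_R[x]$.

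Next, choose $c \in R \setminus \{0\}$ to be a common $R$-denominator of the finitely many coefficients of $PL \in Q_R[x][\pa]$, so that $cPL \in R[x][\pa]$. Because $\sigma$ fixes $R$ pointwise, $c$ commutes with $\pa$, so $cP$ still has order $r$ and $\lc_\pa(cPL)=c\,\lc_\pa(PL)$. Writing the existing data $w,v \in Q_R[x]$ as $w=w_1/\beta$ and $v=v_1/\alpha$ with $w_1,v_1 \in R[x]$ and $\alpha,\beta \in R \setminus \{0\}$, we obtain
\[
\sigma^{-r}\bigl(\lc_\pa(cPL)\bigr)=c\cdot\frac{w}{vp}\,\lc_\pa(L)=\frac{c\alpha w_1}{\beta v_1\,p}\,\lc_\pa(L),
\]
which already has the form $(w'/(v'p))\,\lc_\pa(L)$ with $w'=c\alpha w_1 \in R[x]$ and $v'=\beta v_1 \in R[x]$.

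The only subtle step is transferring the coprimality $\gcd(w,p)=1$ from $Q_R[x]$ to $R[x]$. Since $\gcd_{Q_R[x]}(w,p)=1$ by hypothesis and $c\alpha/\beta$ is a unit in $Q_R$, we have $\gcd_{Q_R[x]}(w',p)=1$ as well. Here the primitivity hypothesis on $p$ is essential: since $R$ is a PID (hence a UFD), Gauss's lemma applies, and for a primitive $p \in R[x]$, coprimality with $w'$ in $Q_R[x]$ upgrades to coprimality of $\mathrm{pp}(w')$ and $p$ in $R[x]$; combined with $\mathrm{cont}(p)=1$, this yields $\gcd_{R[x]}(w',p)=1$. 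Therefore $cP$ is a $p$-removing operator for $L$ over $R[x]$ of order $r$ in the sense of Definition~\ref{DEF:premovable}, and the primitivity of $p$ is used precisely and only in this final $\gcd$ transfer.
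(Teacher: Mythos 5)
Your denominator-clearing core is exactly the paper's argument: the paper writes $PL=\sum_i (a_i/b_i)\pa^i$ with $a_i\in R[x]$, $b_i\in R$, sets $b=\operatorname{lcm}(b_0,\dots,b_{k+r})$ and $P'=bP$, and concludes $\gcd(bw,p)=1$ in $R[x]$ because $p$ is primitive. Your steps 2--3 (a common $R$-denominator $c$, which commutes with $\pa$ since $\sigma$ fixes $R$, plus the Gauss-lemma transfer of the coprimality $\gcd(w,p)=1$ from $Q_R[x]$ to $R[x]$) are the same argument, if anything spelled out more carefully than in the paper.

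The gap is your first step. The hypothesis only gives a $p$-removing operator over $Q_R[x]$ of \emph{some} order $k$, and the claim that the order bounds of Chen--Jaroschek--Kauers--Singer (or Tsai) let you assume $k\le r$ is not justified: those bounds are not bounded by the order of $L$, and removability does not imply removability at order $\le r$. For example, in the differential algebra $\bZ[x][\pa]$ take $L=x\pa-N$ with $N\ge 2$: one has $\pa^{N+1}=QL$ with $\deg_\pa Q=N$, so $x$ is removable, but a short computation shows that no $x$-removing operator of order $0$ or $1=r$ exists (an order-one attempt forces $N=1$). So your reduction to $k\le r$ fails, and read literally the conclusion ``order $r$'' cannot hold in general; the paper's own proof does not attempt it --- it produces a $p$-removing operator over $R[x]$ of the \emph{same} order $k$ as the given one, which is precisely what is used afterwards (an order bound over $Q_R[x]$ is thereby an order bound over $R[x]$). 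Your order-raising trick $P\mapsto\pa^{r-k}P$ is fine as far as it goes, but the right fix is to drop the normalization to order $\le r$ altogether and state and prove the conclusion at the order $k$ of the given operator.
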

\begin{proof}
Assume that~$P \in Q_R(x)[\pa]$ is a $p$-removing operator for~$L$ over~$Q_R[x]$.
Let~$P$ be of order~$k$.
Then~$PL$ is of the form
\[
 PL = \frac{a_{k + r}}{b_{k +r}} \pa^{k + r} + \cdots + \frac{a_1}{b_1} \pa + \frac{a_0}{b_0}
\]
for some~$a_i \in R[x], b_i \in R$, $i = 0, \ldots, k + r$. Moreover,
\[ \sigma^{-k}\left(\lc_{\pa}(PL)\right) = \frac{w}{vp} \lc_\pa(L), \]
where~$w,v \in R[x]$ with~$\gcd(w, p)=1$.

Let $b = \text{lcm}(b_0, b_1, \ldots, b_{k + r})$ in~$R$ and~$P' = b P$.
Then
\[ P'L \in R[x][\pa] \quad \text{and} \quad \si^{-k}(\lc_{\pa}(PL)) = \frac{bw}{vp} \lc_{\pa}(L).\]
Since $p$ is primitive, we have that~$\gcd(bw, p) = 1$ in~$R[x]$.
Thus, $P'$ is a $p$-removing operator of order $k$.
\end{proof}

By the above lemma, an order bound for a $p$-removing operator over~$Q_R[x]$ is also an order bound for
a $p$-removing operator over~$R[x]$. The former has been well-studied in the literature.
Order bounds for differential operators are given in~\cite[Algorithm 3.4]{Tsai2000} and~\cite[Lemma 4.3.12]{Max2013}.
Those for recurrence operators are given in~\cite[Lemma 4]{Chen2013} and~\cite[Lemma 4.3.3]{Max2013}.
Desingularized operators are $p$-removing operators. So we can find order bounds for them.

By Theorem~\ref{TH:dc}, determining a contraction ideal
amounts to finding a desingularized
operator~$T$ and an $R[x]$-basis of~$M_k$, where~$k$ is an upper bound for the order of~$T$.

Next, we present an algorithm for constructing a basis for~$M_k(L)$, where~$L$ is a nonzero operator in~$R[x][\pa]$
and~$k$ is a positive integer. To this end, we embed~$M_k$ into the free module~$R[x]^{k+1}$ over~$R[x]$.

Let us recall the right division in~$Q_{R}(x)[\pa]$~(see \cite[Section 3]{Bronstein1996}). For
$F, G \in Q_{R}(x)[\pa]$ with~$G \neq 0$, there exist unique elements $Q, R \in Q_{R}(x)[\pa]$ with~$\deg_{\pa}(R) < \deg_{\pa}(G)$ such that~$F = Q G + R$. We call~$R$ the \emph{right-remainder} of~$F$ by~$G$ and denote it by $\rrem(F, G)$.

Let~$F \in R[x][\pa]$ with order~$k$.
Then~$F \in M_k$ if and only if~$F \in Q_R[x][\pa] L$,
which is equivalent to~$\rrem(F, L)=0$. The latter gives rise to a linear system
\begin{equation} \label{EQ:linear}
(z_k,  \ldots, z_0) A  = \mathbf{0},
\end{equation}
where~$A$ is a $(k+1) \times r$ matrix over~$Q_R(x)$. Clearing denominators of the elements in~$A$, we may further assume
that~$A$ is a matrix over~$R[x]$. We are concerned with the solutions of~\eqref{EQ:linear} {\em over~$R[x]$}. Set
\[  N_k = \left\{(f_k,  \ldots, f_0) \in R[x]^{k+1} \mid (f_k,  \ldots, f_0) A = \mathbf{0} \right\}. \]
We call~$N_k$ the module of syzygies defined by~\eqref{EQ:linear}.
\begin{theorem} \label{TH:iso}
With the notation just specified, we have
\[
\begin{array}{cccc}
\phi: &  M_k & \longrightarrow & N_k \\
      &   \sum_{i=0}^k f_i \pa^i & \mapsto & (f_k, \ldots, f_0)
\end{array}
\]
is a module isomorphism over~$R[x]$.
\end{theorem}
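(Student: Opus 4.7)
The plan is to establish the four required properties of $\phi$: well-definedness, $R[x]$-linearity, injectivity, and surjectivity. The central idea is that the matrix $A$ encodes the right-division identity: writing $\pa^i = Q_i L + R_i$ in $Q_R(x)[\pa]$ with $\deg_\pa R_i < r$ and $R_i = \sum_{j=0}^{r-1} a_{ij} \pa^j$ for $a_{ij} \in Q_R(x)$, the entries of $A$ are (up to clearing denominators) precisely the $a_{ij}$. Thus, for any $F = \sum_{i=0}^k f_i \pa^i \in R[x][\pa]$,
\[
\rrem(F,L) = \sum_{j=0}^{r-1}\Bigl(\sum_{i=0}^k f_i\, a_{ij}\Bigr)\pa^j,
\]
so that $\rrem(F,L) = 0$ is equivalent to $(f_k,\ldots,f_0) A = \mathbf{0}$. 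Since clearing denominators amounts to scaling $A$ by a nonzero element of $R[x]$, it does not affect the solution set; this will be the one technical point to spell out.

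First I would verify well-definedness. Given $F \in M_k$, the coefficients $f_i$ lie in $R[x]$ by definition of $M_k \subseteq R[x][\pa]$, and $F \in \cont(L) \subseteq Q_R(x)[\pa] L$ forces $\rrem(F, L) = 0$ by uniqueness of right division in $Q_R(x)[\pa]$. By the identity above, $(f_k,\ldots,f_0)$ lies in $N_k$. For $R[x]$-linearity, I observe that in the Ore algebra, left multiplication by $g \in R[x]$ acts on each monomial summand by $g \cdot (f_i \pa^i) = (gf_i)\pa^i$ without invoking any commutation rule; hence $\phi(gF) = g\,\phi(F)$, and additivity is immediate.

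Next I would handle the two directions of bijectivity. Injectivity is trivial: $\phi(F) = 0$ means every coefficient of $F$ vanishes. For surjectivity, I take $(f_k,\ldots,f_0) \in N_k$ and set $F := \sum_{i=0}^k f_i \pa^i \in R[x][\pa]$. The syzygy relation together with the identity above gives $\rrem(F,L) = 0$ in $Q_R(x)[\pa]$, so $F = QL$ for some $Q \in Q_R(x)[\pa]$. Hence $F \in Q_R(x)[\pa]L \cap R[x][\pa] = \cont(L)$ with $\deg_\pa(F) \le k$, which places $F$ in $M_k$ and satisfies $\phi(F) = (f_k,\ldots,f_0)$.

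The main obstacle, if any, is not difficulty but precision: one must justify carefully that passing from the matrix $A$ over $Q_R(x)$ to its denominator-cleared version over $R[x]$ preserves the solution set $N_k \subseteq R[x]^{k+1}$. This reduces to the observation that if $d \in R[x] \setminus \{0\}$ is a common denominator, then $(f_k,\ldots,f_0) A = \mathbf{0}$ iff $(f_k,\ldots,f_0)(dA) = \mathbf{0}$, since $R[x]$ is a domain. With this point dispatched, all four properties fit together into a direct, non-computational proof.
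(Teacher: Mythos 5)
Your proof is correct and follows essentially the same route as the paper: it identifies membership in $M_k$ with the vanishing of $\rrem(F,L)$, which via the construction of $A$ from $\rrem(\pa^i,L)$ is exactly the syzygy condition $(f_k,\ldots,f_0)A=\mathbf{0}$. You merely make explicit two points the paper leaves implicit --- the left $R[x]$-linearity of $\rrem(\cdot,L)$ on operators of order below that needed for commutation, and the fact that clearing denominators in $A$ does not change the solution set over the domain $R[x]$ --- so no further changes are needed.
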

\begin{proof}
Let~$F = \sum_{i=0}^k f_i \pa^i \in R[x]$.
If~$F$ belongs to~$M_k$, then~$\rrem(F, L)=0$, that is,~$(f_k, \ldots, f_0)$ belongs to~$N_k$.
Hence,~$\phi$ is a well-defined map.

Clearly,~$\phi$ is injective. For~$(f_k, \ldots, f_0) \in N_k$,
we have
$$(f_k, \ldots, f_0) A = \mathbf{0}.$$ As~\eqref{EQ:linear} is induced by right division~$\rrem\left( F, L \right)=0$,
$F$ belongs to~$M_k$. So~$\phi$ is surjective. It is straightforward to see that~$\phi$ is an $R[x]$-module homomorphism.
\end{proof}

By Theorem~\ref{TH:iso},~$M_k$ is finitely generated over~$R[x]$. To find an $R[x]$-basis of~$M_k$, it suffices to compute a basis of the module of syzygies defined by~\eqref{EQ:linear}.
When~$R$ is a field, we just need to solve~\eqref{EQ:linear} over a principal ideal domain~\cite[Chapter 5]{Arne2013}. When~$R$ is the ring of integers
or the ring of univariate polynomials over a field, we can use Gr\"obner bases of polynomials over a principal domain~\cite{Kapur1988, David182}. Their implementations are available in computer algebra systems such as {\tt Macaulay2} and {\tt Singular}.

We now consider how to construct a desingularized operator for~$L$. For~$k \in \bZ^+$, we define
\[  I_k = \left\{ [\pa^k] P \mid P \in M_k \right\} \cup \{0\}, \]
where~$[\pa^k] P$ stands for the coefficient of~$\pa^k$ in~$P$. It is clear that~$I_k$ is an ideal of~$R[x]$.
We call~$I_k$ the $k$th coefficient ideal of~$\cont(L)$. By the commutation rule,~$\sigma(I_k) \subset I_{k+1}$.
\begin{lemma}\label{LM:dop}
Let $L \in R[x][\pa]$ be of positive order.
If the $k$th submodule $M_k$ of~$\cont(L)$ has a basis~$\{B_1, \ldots, B_\ell \}$ over~$R[x]$,
then the $k$th coefficient ideal
$$I_k = \left\langle [ \pa^k ] B_1, \ldots,  [ \pa^k ] B_\ell  \right\rangle.$$
\end{lemma}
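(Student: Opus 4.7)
The plan is to prove the two inclusions separately. The easy inclusion $\langle [\pa^k] B_1, \ldots, [\pa^k] B_\ell \rangle \subseteq I_k$ is immediate from the definition of $I_k$: each $B_i$ lies in $M_k$, so each coefficient $[\pa^k] B_i$ belongs to $I_k$ by construction, and then the ideal they generate in $R[x]$ is contained in $I_k$ since $I_k$ is itself an ideal of $R[x]$.

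For the reverse inclusion, I would take an arbitrary nonzero element $f \in I_k$ and write $f = [\pa^k] P$ for some $P \in M_k$. Since $\{B_1, \ldots, B_\ell\}$ generates $M_k$ as an $R[x]$-module, there exist $a_1, \ldots, a_\ell \in R[x]$ with
\[ P = \sum_{j=1}^\ell a_j B_j. \]
The key point I would emphasize is that left multiplication by a polynomial $a \in R[x]$ is compatible with the $\pa^k$-coefficient extraction: if $Q = \sum_i q_i \pa^i$, then $a Q = \sum_i (a q_i) \pa^i$, since no commutation with $\pa$ occurs. Consequently, the map $[\pa^k] \colon M_k \to R[x]$ sending $Q \mapsto [\pa^k] Q$ is an $R[x]$-module homomorphism. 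Applying it to $P$ gives
\[ f = [\pa^k] P = \sum_{j=1}^\ell a_j \, [\pa^k] B_j, \]
which belongs to $\langle [\pa^k] B_1, \ldots, [\pa^k] B_\ell \rangle$.

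There is really no substantial obstacle: the entire content of the lemma is the $R[x]$-linearity of the coefficient-extraction map, which in turn rests on the simple observation that $R[x]$ acts on the left of $R[x][\pa]$ without disturbing the coefficients of individual powers of~$\pa$. The only minor point worth noting explicitly in the write-up is that some of the $B_j$ may have order strictly less than $k$, in which case the corresponding $[\pa^k] B_j$ is zero and contributes trivially to the sum; this causes no difficulty.
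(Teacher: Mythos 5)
Your proof is correct and follows essentially the same route as the paper: the easy inclusion from the definition of $I_k$, and for the converse, writing an element of $M_k$ as an $R[x]$-combination of the basis and extracting the $\pa^k$-coefficient. The only difference is that you make explicit the $R[x]$-linearity of coefficient extraction (and the harmless case $[\pa^k]B_j=0$), which the paper uses tacitly.
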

\begin{proof}
 Obviously,~$\langle [ \pa^k ] B_1, \ldots,  [ \pa^k ] B_\ell \rangle \subseteq I_k$.
 Let~$f \in I_k$. Then $f = \lc_{\pa}(F)$ for some~$F \in M_k$ with~$\deg_{\pa}(F) = k$.
 Since~$M_k$ is generated by~$\{B_1, \ldots, B_\ell \}$ over~$R[x]$,
\[
 F = h_1 B_1 + \cdots + h_\ell B_\ell, \quad
\text{where~$h_1, \ldots, h_\ell \in R[x]$.}\] Thus,
$f  = h_1 \left( [ \pa^k ] B_1 \right) + \cdots + h_\ell \left( [ \pa^k ] B_\ell \right).$
Consequently, $f \in \langle [ \pa^k ] B_1, \ldots,  [ \pa^k ] B_\ell \rangle$.
\end{proof}
\begin{theorem} \label{TH:dop}
Let~$L \in R[x][\pa]$ be of positive order. Assume that the $k$th submodule~$M_k$ of~$\cont(L)$
contains a desingularized operator for~$L$. Let~$s$ be a nonzero element in the $k$th coefficient ideal
with minimal degree. Then an operator~$S$ in~$M_k$ with leading coefficient~$s$ is a desingularized operator.
\end{theorem}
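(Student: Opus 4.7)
The plan is to use the three parts of Lemma~\ref{LM:key} applied to the desingularized operator assumed to lie in $M_k$. First, by the hypothesis some desingularized $T_0 \in M_k$ exists; using Lemma~\ref{LM:key}(ii) I would replace $T_0$ by $T = \pa^{k - \deg_\pa(T_0)} T_0$, which is again desingularized, still lies in $M_k$, and now has order exactly $k$. Thus $\lc_\pa(T) = [\pa^k]T$ belongs to $I_k \setminus \{0\}$, and by the minimality of $s$ we already get $\deg_x(s) \le \deg_x(\lc_\pa(T))$.

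For the reverse inequality, observe that every $f \in I_k \setminus \{0\}$ equals $\lc_\pa(P)$ for some $P \in M_k$ of order exactly $k$, while conversely every nonzero $Q \in M_k$ of order $j < k$ can be replaced by $\pa^{k-j} Q \in M_k$ of order $k$; this replacement does not change the $x$-degree of the leading coefficient, since $\sigma$ fixes $R$ and sends $x$ to $ax + b$ with $a$ a unit. Hence
\[
\deg_x(s) \;=\; \min_{f \in I_k \setminus \{0\}} \deg_x(f) \;=\; \min_{Q \in M_k \setminus \{0\}} \deg_x(\lc_\pa(Q)) \;=\; \deg_x(\lc_\pa(T)),
\]
where the last equality is Lemma~\ref{LM:key}(i) applied to the order-$k$ desingularized operator $T$.

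Now write $\lc_\pa(T) = a g$ with $a \in R$ and $g \in R[x]$ primitive. Lemma~\ref{LM:key}(iii) applied to $S \in \cont(L)$ of order $k$ gives $g \mid s$ in $R[x]$, and the degree equality just established forces the cofactor to be a scalar, so $s = h g$ for some $h \in R$. Since $T$ is desingularized,
\[
\sigma^{r-k}(ag) \;=\; \frac{\alpha}{\beta\, p_1^{k_1}\cdots p_m^{k_m}}\,\lc_\pa(L)
\]
for some $\alpha, \beta \in R$ with $p_i^{d_i}$ non-removable from $L$ whenever $d_i > k_i$. Multiplying through by $h/a$ yields
\[
\sigma^{r-k}(s) \;=\; h\,\sigma^{r-k}(g) \;=\; \frac{h\alpha}{a\beta\, p_1^{k_1}\cdots p_m^{k_m}}\,\lc_\pa(L),
\]
which is exactly the form required by Definition~\ref{DEF:desingularization}, with $h\alpha, a\beta \in R$ and with the same exponents $k_i$. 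The non-removability condition depends only on $L$ and the $k_i$, so it is inherited from $T$, and $S$ is a desingularized operator. I expect the degree-equality step to be the main obstacle; once that is secured, primitivity plus a scalar rescaling finish the argument.
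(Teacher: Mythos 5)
Your proof is correct and follows essentially the same route as the paper: normalize the desingularized operator in $M_k$ to order $k$ via Lemma~\ref{LM:key}(ii), identify $\deg_x(s)$ with $\deg_x(\lc_\pa(T))$ via part~(i) (your explicit reconciliation of the minimum over $I_k$ with the minimum over $M_k$ is the same fact the paper uses implicitly), and conclude from Definition~\ref{DEF:desingularization} that $S$ satisfies~\eqref{EQ:dop} with the same exponents $k_i$, since the non-removability condition depends only on $L$ and the $k_i$. The only divergence is in how you relate $s$ to $\lc_\pa(T)$: the paper shows $u\,\lc_\pa(T)=v\,s$ for the leading $x$-coefficients $u,v\in R$ by observing that otherwise $uT-vS$ would contradict part~(i), whereas you invoke part~(iii) to get $g\mid s$ and then use the degree equality to force $s=hg$ with $h\in R$ --- an equally valid step, since part~(iii) already packages the cancellation argument the paper repeats.
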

\begin{proof}
Let~$T$ be a desingularized operator in~$M_k$. By Lemma~\ref{LM:key}~(ii), we may assume that the order of~$T$ is equal to~$k$.
Let~$t = \lc_\pa(T)$. Then~$\deg(t)=\deg(s)$ by Lemma~\ref{LM:key}~(i).
Let~$u$ be the leading coefficient of~$s$ with respect to~$x$ and~$v$ be that of~$t$.
Then~$ut-vs$ is zero. Otherwise, $u T - v S$ would be an operator of order~$k$ whose leading coefficient with respect to~$\pa$
has degree lower than~$\deg_x(t)$, a contradiction to Lemma~\ref{LM:key}~(i).
It follows from $ut=vs$ and~Definition~\ref{DEF:desingularization} that~$S$ is also a desingularized operator.
\end{proof}

Let~$L$ be an operator in~$ R[x][\pa]$ of positive order. We can compute a basis~$\{B_1, \ldots, B_{\ell}\}$ for the~$k$th submodule of~$\cont(L)$ by Theorem~\ref{TH:iso}, where~$k$
is an upper bound on the order of a desingularized operator for~$L$. By Lemma~\ref{LM:dop}, we can obtain a basis~$\{b_1, \ldots, b_{\ell}\}$ for the $k$th coefficient ideal~$I_k$ of~$\cont(L)$. Let~$\bar{I}_k$ be the extension ideal of~$I_k$ in~$Q_R[x]$. Using the extended Euclidean algorithm in~$Q_R[x]$ and clearing denominators, we find
cofactors~$c_1,$ \ldots, $c_\ell \in R[x]$ and~$s \in R[x]$ such that
\[     \bar{I}_k = \langle s  \rangle \quad \text{and} \quad c_1 b_1 + \cdots + c_\ell b_\ell = s. \]
It follows from Theorem~\ref{TH:dop} that~$T=c_1 B_1 + \cdots + c_\ell B_\ell$ is a desingularized operator for~$L$ with~$\lc_\pa(T)=s$.
Let~$a$ be the content of~$s$ with respect to~$x$.
By Theorem~\ref{TH:dc},~$\cont(L)$ is the saturation of~$R[x][\pa] \cdot M_k$ with respect to~$a$.
Note that~$a$ belongs to~$R$, which is contained in the center of~$R[x][\pa]$. So a basis of the saturation ideal
can be computed in the same way as in the commutative case.




\begin{prop}\label{PROP:saconst}
 Let~$I$ be a left ideal of~$R[x][\pa]$ and~$c$ be non-zero element in~$R$. Assume that~$J$ is a left ideal~
$$R[x, y][\pa] \cdot \left( I \cup \{ 1 -c y \} \right),$$
where~$y$ is a new indeterminate and commutes with every element in~$R[x][\pa]$.
 Then~$I : c^{\infty}$ = $J \cap R[x][\pa]$.
\end{prop}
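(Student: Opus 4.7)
\bigskip

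The plan is to prove the two containments separately, using the fact that $y$ is central in $R[x,y][\pa]$ (given), together with the observation that $c$, lying in $R$, is central in $R[x][\pa]$ (since $\sigma$ fixes $R$ and $\delta$ is $R$-linear, hence $\delta(c) = c\delta(1) = 0$). Thus both $c$ and $y$, and so $1 - cy$, commute with every operator.

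For the inclusion $I : c^{\infty} \subseteq J \cap R[x][\pa]$, take $P \in R[x][\pa]$ with $c^i P \in I$. The idea is the Rabinowitsch-style identity
\[
  P \;=\; P\bigl(1 - (cy)^i\bigr) \;+\; P(cy)^i.
\]
The first summand factors as $P(1-cy)\bigl(1 + cy + \cdots + (cy)^{i-1}\bigr)$, and since $1-cy$ is central and lies in $J$, this summand is in $J$. The second summand equals $y^i(c^i P)$ by centrality, and lies in $J$ because $c^i P \in I \subseteq J$. Hence $P \in J$, and $P \in R[x][\pa]$ by hypothesis.

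For the reverse inclusion, the plan is to use the evaluation homomorphism. Let $\phi: R[x,y][\pa] \to R_c[x][\pa]$ be the ring homomorphism that sends $y \mapsto c^{-1}$ and is the identity elsewhere, where $R_c$ denotes the localization of $R$ at $c$. This is well-defined because $y$ is central. Given $P \in J \cap R[x][\pa]$, write
\[
  P \;=\; \sum_{j} Q_j L_j \;+\; Q_0 (1 - cy), \qquad L_j \in I,\; Q_0, Q_j \in R[x,y][\pa].
\]
Applying $\phi$, the last term vanishes, giving $P = \sum_j \phi(Q_j) L_j$ in $R_c[x][\pa]$. Choosing $N$ large enough that $c^N \phi(Q_j) \in R[x][\pa]$ for every $j$, and multiplying both sides by $c^N$, we get $c^N P = \sum_j (c^N \phi(Q_j)) L_j \in I$, so $P \in I : c^{\infty}$.

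The only subtlety to watch is the noncommutativity of $R[x][\pa]$: every manipulation in the first direction relies on the centrality of $c$, $y$, and hence $1 - cy$, so that powers like $(cy)^i$ behave as expected and may be moved past $P$; in the second direction, the centrality of $y$ is what guarantees $\phi$ is a ring homomorphism. Once these are in hand, both inclusions are routine.
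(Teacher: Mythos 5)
Your proof is correct and is essentially the argument the paper itself invokes: the paper's proof is a one-line appeal to the commutative Rabinowitsch-trick argument of \cite[Proposition 6.37]{Weispfenning1993}, noting that it carries over because $y$ and $c$ commute with $\pa$. You have simply written out that same argument in detail --- the identity $P = P(1-cy)\bigl(1+cy+\cdots+(cy)^{i-1}\bigr) + y^i(c^iP)$ for one inclusion and the substitution $y \mapsto c^{-1}$ with denominator clearing for the other --- together with the centrality verifications ($\sigma(c)=c$, $\delta(c)=0$) that justify the paper's claim that the commutative proof carries over.
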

\begin{proof}
Since both~$y$ and~$c$ commute with~$\pa$, the argument in~\cite[page 266, Proposition 6.37]{Weispfenning1993}
carries over.
%
%
%
\end{proof}

We outline our method for determining contraction ideals.
\begin{algo}\label{ALGO:cont}
Given $L \in R[x][\pa]$, where $\pa x = (x + 1) \pa$ or $\pa x = x \pa + 1$,
compute a basis of $\cont(L)$.
\begin{enumerate}
 \item[(1)] Derive an upper bound~$k$ on the order of a desingularized operator for~$L$.
 \item[(2)] Compute an $R[x]$-basis of~$M_k$.
 \item[(3)] Compute a desingularized operator $T$, and set~$a$ to be the content of~$\lc_{\pa}(T)$ with respect to~$x$.
 \item[(4)] Compute a basis of $(R[x][\pa] \cdot M_k) : a^{\infty}$.
\end{enumerate}
\end{algo}
The termination of this algorithm is evident.
Its correctness follows from Theorem~\ref{TH:dc}.
We assume that
the commutation rule in~$R[x][\pa]$ is either~$\pa x = (x + 1) \pa$ or~$\pa x = x \pa + 1$ in~$R[x][\pa]$, because
we only know order bounds for those cases.
In step~1, the order bound is derived from~\cite[Lemma 4]{Chen2013} and~\cite[Algorithm 3.4]{Tsai2000}.
In step~2, we need to solve linear systems over~$R[x]$ as stated in Theorem~\ref{TH:iso}. This can be done
by Gr\"obner basis computation in a free $R[x]$-module of finite rank. In step~3,~$T$ is computed according to Theorem~\ref{TH:dop} and the extended Euclidean algorithm in~$Q_R[x]$. The last step is carried out according to Proposition~\ref{PROP:saconst}.



\begin{example}
In the shift Ore algebra~$\bQ[t][n][\pa]$, in which the commutation rule is~$\pa n = (n + 1) \pa$.
Consider
\[
 L = (n -1) (n + t) \pa + n + t + 1.
\]
By~\cite[Lemma 4]{Chen2013}, we obtain an order bound~$2$ for a desingularized operator. Thus,~~$M_2$ contains a desingularized operator for~$L$.
In step~2 of  Algorithm~\ref{ALGO:cont}, we find that~$M_2$ is generated by
\begin{eqnarray*}
 T_1 & = & (2 + t) n \pa^2 + (4 - n + t) \pa - 1, \\
 T_2 & = & (n - 1) n \pa^2 + 2 ( n - 1) \pa + 1,
\end{eqnarray*}
where~$T_1$ is a desingularized operator,~$\lc_{\pa}(T_1) = (2 + t) n$.
Using Gr\"{o}bner bases,~$\cont(L) = (\bQ[t][n][\pa] \cdot M_2 ) : (2 + t)^{\infty}$ is generated by $\{ L, T_2 \}$.
\end{example}

Let us consider the example in Section~\ref{SECT:intro}.
\begin{example} \label{EX:ah}
In the shift Ore algebra~$\bZ[n][\pa]$, let
\[
 L = (1 + 16 n)^2 \pa^2 - (224 + 512 n) \pa - (1 + n)(17 + 16 n)^2.
\]
By~\cite[Lemma 4]{Chen2013}, we obtain an order bound~$3$ for a desingularized operator. Thus,~~$M_3$ contains a desingularized operator for~$L$.
In step~2 of  Algorithm~\ref{ALGO:cont}, we find that~$M_3$ is generated by~$\{L, \tilde{T} \}$, where~$\tilde T$ is given in~\eqref{EQ:ah}.
Note that~$\lc_{\pa}(\tilde{T}) {=} 1$. Thus,~$\tilde T$ is a desingularized operator.
Consequently,
$$\cont(L) = (\bZ[n][\pa] \cdot \{L, \tilde{T} \}   ) : 1^{\infty} = \bZ[n][\pa] \cdot \{L, \tilde{T} \}.$$
\end{example}

\begin{example} \label{EX:bm}
In the differential Ore algebra~$\bZ[x][\pa]$, in which the commutation rule is~$\pa x = x \pa + 1$.
Consider the operator
$L = x \pa^2 - (x + 2) \pa + 2 \in \bZ[x][\pa]$ in~\cite{Barkatou2015}.
By~\cite[Algorithm 3.4]{Tsai2000}, we obtain an order bound~$4$ for a desingularized operator. Thus,~~$M_4$ contains a desingularized operator for~$L$.
In step~2 of  Algorithm~\ref{ALGO:cont}, we find that~$M_4$ is generated by~$\{L, \pa L, T \}$, where~$T = \pa^4 - \pa^3$.
Note that~$\lc_{\pa}(T) = 1$. Thus,~$T$ is a desingularized operator.
Consequently,
$$\cont(L) = (\bZ[x][\pa] \cdot \{L, \pa L, T \}   ) : 1^{\infty} = \bZ[x][\pa] \cdot \{L, T \}.$$
\end{example}

\section{Complete desingularization} \label{SECT:cd}
As seen in Section~\ref{SECT:intro}, the shift operator
\[ L = (1 + 16 n)^2 \pa^2 - (224 + 512 n) \pa - (1 + n)(17 + 16 n)^2 \]
has a desingularized operator~$T$ with leading coefficient~$64$.  %
But the content of~$\lc_\pa(L)$ is 1. The redundant content~$64$ has been removed by computing another desingularized operator~$\tilde{T}$ in~\eqref{EQ:ah}. This enables us to see immediately that the sequence annihilated by~$L$ is an
integer sequence when its initial values are integers.

Krattenthaler proposes a conjecture in~\cite{George2015}: Let~$(a_n)_{n \ge 0}$ and~$(b_n)_{n \ge 0}$ be two P-recursive sequences
over~$\bZ$ with leading coefficients~$n$. Then~$(n! a_n b_n)_{n \ge 0}$
is also a P-recursive sequence over~$\bZ$ with leading
coefficient~$n$. To test the conjecture for the two particular sequences, one may first compute an annihilator~$L$ of~$(n! a_n b_n)_{n \ge 0}$,
and then look for a nonzero operator in~$\cont(L)$ whose leading coefficient has \lq\lq minimal\rq\rq~content with respect to~$n$. When the content is equal to~$1$, the conjecture is true for these
sequences.

These two observations motivate us to define the notion of completely desingularized operators.
\begin{definition} \label{DEF:cd}
Let~$L \in R[x][\pa]$ with positive order, and~$Q$ a desingularized operator for~$L$.
Set~$\lc_\pa(Q)=c \, g$,  where~$c$ is the content of~$\lc_\pa(Q)$ with respect to~$x$ and~$g$
the corresponding primitive part. We call~$Q$ a {\em completely designularized operator} for~$L$
if~$c$ is a divisor of the content of the leading coefficient of every desingularized
operator for~$L$.
\end{definition}

To see the existence of completely designularized operators, we assume that~$L$ is of order~$r$.
For a desingularized operator~$T$ of order~$k$, equations~\eqref{EQ:factor} and~\eqref{EQ:dop} in Definition~\ref{DEF:desingularization} enable us to write
\begin{equation} \label{EQ:cp}
\sigma^{r - k}\left(\lc_\pa(T)\right) = c_T \, g,
\end{equation}
where~$c_T \in R$ and $g = p_1^{e_1-k_1} \cdots p_s^{e_m-k_m}.$ Note that~$g$ is primitive and independent of the choice
of desingularized operators.
\begin{lemma} \label{LM:ideal}
Let~$L \in R[x][\pa]$ with order~$r >0$.
Set~$I$ to be the set consisting of zero and~$c_T$ given in~\eqref{EQ:cp} for all desingularized
operators for~$L$. Then~$I$ is an ideal of~$R$.
\end{lemma}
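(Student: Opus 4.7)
The plan is to verify the two closure properties defining an ideal: $c_{T_1} - c_{T_2} \in I$ for any two elements of $I$, and $r \cdot c_T \in I$ for any $r \in R$ and $c_T \in I$. The central technical ingredient is Lemma~\ref{LM:key}(ii), which guarantees that $\pa^i T$ remains a desingularized operator for every $i \in \bN$; this will allow me to reduce to the case where two given desingularized operators have equal order.

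First, I would record the \emph{invariance} of the constant $c_T$ under the operation $T \mapsto \pa^i T$. If $T$ has order $k$, then $\lc_\pa(\pa^i T) = \sigma^i(\lc_\pa(T))$, so $\sigma^{r-(k+i)}(\lc_\pa(\pa^i T)) = \sigma^{r-k}(\lc_\pa(T)) = c_T \, g$, proving $c_{\pa^i T} = c_T$. This also confirms that $g$ depends only on $L$, not on the chosen desingularized operator, as already noted before the statement.

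Next, for closure under addition, take nonzero $c_{T_1}, c_{T_2} \in I$ coming from desingularized operators $T_1, T_2$ of orders $k_1, k_2$. Replacing $T_i$ by $\pa^{k-k_i} T_i$ with $k = \max(k_1, k_2)$ I may assume $T_1$ and $T_2$ both have order $k$, and by the invariance above the associated constants are still $c_{T_1}, c_{T_2}$. Then $T_1 + T_2 \in \cont(L)$ and
\[
\lc_\pa(T_1) + \lc_\pa(T_2) = (c_{T_1} + c_{T_2})\, \sigma^{-(r-k)}(g).
\]
If this is nonzero, $T_1+T_2$ has order $k$ with $\sigma^{r-k}(\lc_\pa(T_1+T_2)) = (c_{T_1}+c_{T_2})\,g = \frac{c_{T_1}+c_{T_2}}{c\, p_1^{k_1}\cdots p_m^{k_m}}\lc_\pa(L)$, which is the form~\eqref{EQ:dop}; the non-removability side condition is inherited because it depends only on $L$ and on the exponents $k_i$ that are common to all desingularized operators of order $k$. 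Hence $T_1 + T_2$ is desingularized and $c_{T_1}+c_{T_2} \in I$. If instead $c_{T_1}+c_{T_2}=0$, then $0 \in I$ by definition. Closure under negation is immediate since $-T$ is desingularized with $c_{-T} = -c_T$.

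For closure under multiplication by $r \in R$, note that $rT$ has leading coefficient $r\,\lc_\pa(T)$, so $\sigma^{r-k}(\lc_\pa(rT)) = r c_T \, g$ (using that $\sigma$ fixes $R$), and the same non-removability bookkeeping applies since multiplying by a constant in $R$ does not alter the $p_i$-factorization in $R[x]$. Thus $rT$ is either zero or a desingularized operator with $c_{rT} = r c_T$, giving $r c_T \in I$. Combining both closure properties shows $I$ is an ideal of $R$. The only delicate point is bookkeeping the side condition in Definition~\ref{DEF:desingularization}—ensuring that the exponents $k_i$ attached to an operator depend only on $L$—but this is exactly the content of the paragraph preceding the lemma, where $g$ is shown to be intrinsic to $L$.
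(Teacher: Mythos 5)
Your argument follows essentially the same route as the paper's own proof: both establish closure under scalar multiplication by $R$ directly from the definition, and both handle closure under addition by applying $\pa^i$ (justified by Lemma~\ref{LM:key}(ii)) to bring the two desingularized operators to a common order, then reading off the constant from the sum of leading coefficients. Your write-up is slightly more explicit about why the side condition in Definition~\ref{DEF:desingularization} (non-removability of $p_i^{d_i}$ for $d_i>k_i$) transfers to the sum, a point the paper leaves implicit, but the underlying idea and the key lemma invoked are the same.
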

\begin{proof}
By Definition~\ref{DEF:desingularization}, the product of a nonzero element of~$R$ and a desingularized operator for~$L$
is also a desingularized one. So it suffices to show that~$I$ is closed under addition.
Let~$T_1$ and~$T_2$ be two desingularized operators of orders~$k_1$ and~$k_2$, respectively. Assume that~$k_1 \ge k_2$.
By~\eqref{EQ:cp},
\[ \sigma^{r - k_1}\left(\lc_\pa(T_1)\right) = c_1 \, g \quad \text{and} \quad
\sigma^{r - k_2}\left(\lc_\pa(T_2)\right) = c_2 \, g. \]
If~$c_1+c_2=0$, then there is nothing to prove. Otherwise,
a direct calculation implies that
\[ \lc_\pa(T_1) = c_1 \si^{k_1-r}(g) \quad \text{and} \quad \lc_\pa\left(\pa^{k_1-k_2} T_2\right) = c_2 \si^{k_1-r}(g). \]
Thus,~$T_1+\pa^{k_1-k_2}T_2$ has leading coefficient~$(c_1+c_2) \si^{k_1-r}(g).$
Accordingly,~$T_1+\pa^{k_1-k_2}T_2$
is  a desingularized one, which implies that~$c_1 + c_2$ belongs to~$I$.
\end{proof}
Since~$R$ is a principal ideal domain,~$I$ in the above lemma is generated by an element~$c$, which corresponds to
a completely desingularized operator.

Let $\prec$ be a term order on~$\left\{x^i\pa^j \mid i, j \in \bN \right\}$. For any non-zero operator $P \in R[x][\pa]$, we define the {\em head term}
of $P$ to be the highest term appearing in~$P$ with respect to $\prec$, and denote it by~$\operatorname{HT}(P)$.

The next technical  lemma serves as a step-stone to construct completely desingularized operators.
\begin{lemma}\label{LM:sc}
Let $L \in R[x][\pa]$ with order~$r>0$, and~$k \ge r$. Then
$R[x][\pa] \cdot M_k {=} R[x][\pa] \cdot M_{k + 1}$ if and only if~$\si(I_k) {=} I_{k + 1}.$
\end{lemma}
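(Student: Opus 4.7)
The plan is to prove the two implications separately. Two observations reduce the work: first, $\sigma(I_k) \subseteq I_{k+1}$ always holds, since for $P \in M_k$ with $[\pa^k] P = g$, the operator $\pa P$ lies in $M_{k+1}$ with $[\pa^{k+1}](\pa P) = \sigma(g)$; second, writing $J_k = R[x][\pa] \cdot M_k$, the inclusion $J_k \subseteq \cont(L)$ combined with $M_k \subseteq J_k$ gives $J_k \cap \{F : \deg_\pa F \le k\} = M_k$. Together these reduce the statement to the equivalence $M_{k+1} \subseteq J_k \iff I_{k+1} \subseteq \sigma(I_k)$.

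For the $(\Leftarrow)$ direction, I would take $F \in M_{k+1}$ and set $f = [\pa^{k+1}] F \in I_{k+1} = \sigma(I_k)$. Writing $f = \sigma(g)$ for some $g \in I_k$ and choosing $P \in M_k$ with $[\pa^k] P = g$, the operator $F - \pa P$ has $\pa^{k+1}$-coefficient zero and lies in $\cont(L)$, hence $F - \pa P \in M_k$; so $F = (F - \pa P) + \pa P \in J_k$.

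For the $(\Rightarrow)$ direction, given $F \in M_{k+1}$, the hypothesis places $F$ in $J_{k+1} = J_k$. I would first establish a straightening lemma: every $F \in J_k$ admits a decomposition $F = \sum_{j=0}^{J} \pa^j Q_j$ with $Q_j \in M_k$. This uses the identity $A \pa = \pa \sigma^{-1}(A) - \delta(\sigma^{-1}(A))$ for $A \in R[x]$, iterated to $A \pa^j = \pa^j \sigma^{-j}(A) + (\text{terms of } \pa\text{-degree} < j)$, so that each generator $A \pa^j B$ with $B \in M_k$ becomes $\pa^j(\sigma^{-j}(A) B)$ plus a remainder of strictly smaller $\pa$-degree that still lies in $R[x][\pa] \cdot M_k$; induction on the $\pa$-degree then yields the claim. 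Given such a decomposition of $F$, I set $G = \sum_{j=2}^{J} \pa^{j-1} Q_j \in J_k$, so that $F - Q_0 - \pa Q_1 = \pa G$. All three operators on the left have $\pa$-degree at most $k+1$, hence so does $\pa G$; since $\sigma$ is an automorphism, $\deg_\pa(\pa G) = \deg_\pa G + 1$, forcing $\deg_\pa G \le k$ and thus $G \in M_k$ by the reduction above. Then $[\pa^{k+1}] F = \sigma([\pa^k] Q_1 + [\pa^k] G) \in \sigma(I_k)$.

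The main obstacle I anticipate is the straightening lemma, since the noncommutativity of $R[x][\pa]$ obstructs any naive extraction of a leading $\pa$-power. The saving grace is that $\sigma$ is an automorphism of $R[x]$, so the commutation rule can be inverted cleanly and the remainder after moving $\sigma^{-j}(A)$ past $\pa^j$ has strictly smaller $\pa$-degree, giving a finite recursion. Once the decomposition is in hand, the rest is degree bookkeeping together with the identification $J_k \cap \{F : \deg_\pa F \le k\} = M_k$.
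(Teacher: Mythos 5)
Your proof is correct, and for the harder direction it takes a genuinely different route from the paper. The backward implication ($\si(I_k)=I_{k+1}\Rightarrow$ equality of ideals) is essentially the paper's argument: subtract $\pa P$ with $P\in M_k$ chosen so that the $\pa^{k+1}$-coefficients cancel. For the forward implication, the paper works with a Gr\"obner basis of $R[x][\pa]\cdot M_k$ with respect to a term order in which the $\pa$-degree dominates: since S- and G-polynomials of elements of order at most $k$ again have order at most $k$, Buchberger's algorithm yields a Gr\"obner basis $\mathcal G$ with $\deg_\pa(G)\le k$, and a standard representation $T=\sum_G V_G G$ with $\operatorname{HT}(V_G G)\preceq\operatorname{HT}(T)$ bounds each summand's order by $k+1$; comparing $\pa^{k+1}$-coefficients then gives $\si^{-1}(p)\in I_k$. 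You replace all of this by the elementary observation that $R[x][\pa]\cdot M_k=\sum_{j\ge 0}\pa^j M_k$, proved by pushing powers of $\pa$ to the left via $A\pa=\pa\si^{-1}(A)-\delta(\si^{-1}(A))$ (legitimate because $\si$ is an automorphism of $R[x]$ and $M_k$ is a left $R[x]$-module); the lack of head-term control in your decomposition is then compensated by the neat bookkeeping step $\pa G=F-Q_0-\pa Q_1$ together with injectivity of $\si$, which forces $G\in M_k$ and yields $[\pa^{k+1}]F=\si([\pa^k](Q_1+G))\in\si(I_k)$. Your reduction $J_k\cap\{\deg_\pa\le k\}=M_k$ is also valid since $\cont(L)$ is a left ideal. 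The trade-off: the paper's proof leans on Gr\"obner-basis theory over a PID in the Ore setting (which it only sketches, but which it needs anyway for its algorithms), whereas your argument is self-contained and avoids that machinery entirely; both proofs ultimately exploit the same structural fact, namely that $\si$ is invertible so that multiplication by $\pa$ raises the order by exactly one and the leading coefficient transforms by $\si$.
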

\begin{proof}
%
Assume that~$\si(I_k) = I_{k + 1}$. Since $M_k \subset M_{k + 1}$, it suffices to prove
that~$M_{k + 1} \subset R[x][\pa] \cdot M_k$.

For each $T \in M_{k + 1} \setminus M_k$,  we have that~$\lc_{\pa}(T) \in \si(I_k)$.
Thus, there exists $F \in M_k$ such that $\si(\lc_{\pa}(F)) = \lc_{\pa}(T)$.
In other words,~$T - \pa F \in M_k$. Consequently, $T \in R[x][\pa] \cdot M_k$.

Conversely, assume that~$R[x][\pa] \cdot M_{k + 1}= R[x][\pa] \cdot M_k$.
It suffices to prove that $I_{k + 1} \subseteq \si(I_k)$ because $\si(I_k) \subseteq I_{k + 1}$ by definition.
Let $\mathcal{B}$ be an $R[x]$-basis of~$M_k$. Then $\mathcal{B}$ is also a basis of the left ideal~$R[x][\pa] \cdot M_k$.

Let $\prec$ be the term order such that
$x^{\ell_1} \pa^{m_1} {\prec} x^{\ell_2} \pa^{m_2}$ if either~$m_1 {<} m_2$ or~$m_1 {=} m_2$ and~$\ell_1 {<} \ell_2$.
Since~$\deg_{\pa}(P) {\leq} k$ for each $P \in \mathcal{B}$, S-polynomials and G-polynomials formed by elements in~$M_k$
have orders no more than~$k$~\cite[Definition 10.9]{Weispfenning1993}.  By Buchberger's algorithm, there exists a Gr\"{o}bner basis~$\mathcal{G}$ of~$R[x][\pa] \cdot \mathcal{B}$ with respect to $\prec$ such that~$\deg_{\pa}(G) \leq k$ for each $G \in \mathcal{G}$.

For~$p {\in} I_{k + 1} {\setminus} \{0\}$, there exists $T \in M_{k + 1} \setminus  M_k$ such that $\lc_{\pa}(T) {=} p$.
Since~$T {\in} R[x][\pa] \cdot M_{k + 1}$, we have~$T {\in} R[x][\pa] \cdot M_k$.  It follows that~$T$ is reduced to zero by~$\mathcal{G}$. Thus,
\begin{equation} \label{EQ:gb}
T = \sum_{G \in \mathcal{G}} V_G G \quad \text{ with~$\operatorname{HT}(V_G G) \preceq \operatorname{HT}(T)$}.
\end{equation}
By the choice of term order, $\deg_{\pa}(V_G G) \leq k + 1$. If~$V_G G$ is of order~$k+1$, then~$\lc_\pa(V_G G) = a_G  \, \si^{k  + 1 - d_G}(\lc_{\pa}(G))$,
where~$a_G$ is in~$R[x]$ and~$d_G$ is the order of~$G$. Comparing the leading coefficients of operators in both sides of~\eqref{EQ:gb} and noticing~$\deg_\pa(T)=k+1$, we have
\[ p = \sum_{\deg_{\pa}(V_G G) = k + 1} a_G \, \si^{k  + 1 - d_G}(\lc_{\pa}(G)). \]
It follows that
\begin{equation} \label{EQ:lc}
\si^{-1} \left(p\right) = \sum_{\deg_{\pa}(V_G G) = k + 1} \sigma^{-1}(a_G) \, \si^{k  - d_G}(\lc_{\pa}(G)).
\end{equation}
On the other hand,~$\si^{k - d_G}(\lc_{\pa}(G)) = \lc_\pa\left(\pa^{k-d_G} G\right)$ implies that~$\si^{k - d_G}(\lc_{\pa}(G)) \in I_k$.
We have that~$\si^{-1}(p) \in I_k$ by~\eqref{EQ:lc}. Thus,~$I_{k+1}  \subset \si(I_k)$.
\end{proof}
By the above lemma,~$I_j = \si^{j-\ell}(I_\ell)$ whenever~$j \ge \ell$ and~$\cont(L)=R[x][\pa]\cdot M_\ell$. In this case,
a basis of~$I_j$ can be obtain by shifting a basis of~$I_\ell$, which allows us to find a completely desingularized
operator.
\begin{theorem} \label{TH:gbcd}
Let $L \in R[x][\pa]$ with order~$r>0$.
Assume that the $\ell$th submodule~$M_\ell$ of~$\cont(L)$ contains a basis of~$\cont(L)$. Let~$I_\ell$ be the $\ell$th coefficient
ideal of~$\cont(L)$, and~$\mathbf{G}$ a reduced Gr\"obner basis of~$I_\ell$.
Let~$f \in \mathbf{G}$ be of the lowest degree in~$x$ and $F$ be the operator in~$\cont(L)$ with~$\lc_\pa(F)=f$.
Then $F$ is a completely desingularized operator for~$L$.
\end{theorem}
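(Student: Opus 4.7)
\medskip\noindent\textbf{Proof plan.}
The plan is to exploit the structure of the $\ell$th coefficient ideal $I_\ell$: first, identify which elements realize the minimal $x$-degree in $I_\ell$, and second, pin down the content of $f$ by invoking the defining property of reduced Gr\"obner bases.

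Since $M_\ell$ contains a basis of $\cont(L)$, we have $R[x][\pa]\cdot M_\ell=\cont(L)=R[x][\pa]\cdot M_{\ell+1}$, so Lemma~\ref{LM:sc} and iteration yield $I_k=\si^{k-\ell}(I_\ell)$ for all $k\ge\ell$. Set $g_\ell:=\si^{\ell-r}(g)$, where $g$ is the primitive polynomial of~\eqref{EQ:cp}; this is the common primitive part of $\lc_\pa$ of every desingularized operator of order $\ell$. Applying Lemma~\ref{LM:key}(iii) at any order $k\ge\ell$ admitting a desingularized operator and pulling back by $\si^{-(k-\ell)}$ shows that $g_\ell$ divides every element of $I_\ell$ in $R[x]$.

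I would next characterize the minimal-$x$-degree slice of $I_\ell$: for $\alpha\in R$, $\alpha g_\ell\in I_\ell$ iff $\alpha\in(c)$, where $(c)\subset R$ is the ideal from Lemma~\ref{LM:ideal}. The $\Leftarrow$ direction uses $cg_\ell\in I_\ell$ (obtained by shifting the leading coefficient of a desingularized operator realizing $c$) together with $I_\ell$ being an $R[x]$-ideal; the $\Rightarrow$ direction uses that any order-$\ell$ operator with leading coefficient $\alpha g_\ell$ has $\lc_\pa$ of minimum $x$-degree in $M_\ell$ and hence, after lifting by $\pa$ to where Theorem~\ref{TH:dop} applies, yields a desingularized operator forcing $\alpha\in(c)$.

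A standard reduction argument combined with $g_\ell$ dividing every element of $I_\ell$ now forces $\deg_x(f)=\deg_x(g_\ell)$, so $f=\lambda g_\ell$ for some $\lambda\in R$ with $\lambda\in(c)$. For the reverse containment $(c)\subseteq(\lambda)$, I would invoke the property (from the framework of Section~\ref{SUBSECT:gb}) that the leading coefficient $\lc_x(f)$ of the lowest-degree element in a reduced Gr\"obner basis generates the ideal of $x$-leading-coefficients of elements of $I_\ell$ at that degree. By the previous paragraph, that ideal is $(c\,\lc_x(g_\ell))$, so $\lambda\,\lc_x(g_\ell)=u\,c\,\lc_x(g_\ell)$ for a unit $u\in R$, giving $\lambda=uc$. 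Finally, $F$ has order $\ell$ with $\lc_\pa(F)=\lambda g_\ell$ of minimum $x$-degree in $M_\ell$; Theorem~\ref{TH:dop} makes $F$ a desingularized operator, and the $x$-content of $\lc_\pa(F)$ equals $\lambda$, which generates $(c)$, so $F$ is completely desingularized by Definition~\ref{DEF:cd} and Lemma~\ref{LM:ideal}. The hard part is the Gr\"obner basis step: verifying that in the extension of Weispfenning's setup to Ore algebras over a PID, the lowest-degree reduced basis element really does control the leading coefficient ideal at that degree.
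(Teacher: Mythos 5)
Your argument is correct and reaches the conclusion via a closely related but differently organized route. The paper's proof picks a fixed completely desingularized operator $S$, uses Lemma~\ref{LM:sc} to move $\lc_\pa(S)$ into $I_\ell$ as $c_S\,\si^{\ell-r}(g)$, and then obtains $c_F\mid c_S$ from the fact that this minimal-degree element of $I_\ell$ must reduce to zero by $f$, while the reverse divisibility $c_S\mid c_F$ is immediate from Definition~\ref{DEF:cd} once $F$ is known (via Theorem~\ref{TH:dop}) to be desingularized. You instead prove a full structural statement, namely that the degree-$d_0$ slice of $I_\ell$ is exactly $(c)\,g_\ell$, which requires both implications of your characterization (the forward one going through Theorem~\ref{TH:dop} and Lemma~\ref{LM:ideal}), and you then compare head-coefficient ideals to conclude $(\lambda)=(c)$. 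Both routes are valid and rest on the same lemmas; the paper's is shorter because invoking Definition~\ref{DEF:cd} for $c_S\mid c_F$ lets it bypass the forward implication of your slice characterization entirely. One small correction to the ``hard part'' you flag: the Gr\"obner basis $\mathbf{G}$ lives in the commutative ring $R[x]$ (it is a basis of the coefficient ideal $I_\ell$, not of an Ore ideal), so no Ore-algebraic extension of Gr\"obner theory is at issue there. The only property needed is the strong Gr\"obner basis property over a Euclidean/PID coefficient ring, namely that the leading coefficient of the unique minimal-degree basis element divides the leading coefficient of every degree-$d_0$ element of $I_\ell$; this is precisely what the paper also relies on when asserting that $c_S\,\si^{\ell-r}(g)$ reduces to zero by $f$.
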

\begin{proof}
By Lemma~\ref{LM:ideal},~$\cont(L)$ contains a completely desingularized operator~$S$.
Let~$j=\deg_\pa(S)$. Then~$\lc_\pa(S)$ is in~$I_j$ for some~$j \ge \ell$.
By Lemma~\ref{LM:sc},~$\sigma^{j-\ell}(I_\ell)=I_j$. It follows that~$\si^{\ell - j}(\lc_\pa(S))$ belongs to $I_\ell$.
By~\eqref{EQ:cp}, we have
\[\sigma^{r - j}\left(\lc_\pa(S)\right) = c_S \, g, \]
where~$c_S \in R$ and~$g$ is a primitive polynomial in~$R[x]$.
A direct calculation implies that $\si^{\ell - j}(\lc_\pa(S)) = c_S \si^{\ell - r}(g)$.
Since~$\si^{\ell - j}(\lc_\pa(S)) \in I_\ell$,  so does~$c_S \si^{\ell - r}(g)$.

Note that~$F$ is a desingularized operator by Theorem~\ref{TH:dop}. By~\eqref{EQ:cp},~$\sigma^{r - \ell}\left(f\right) = c_F \, g,$
where~$c_F \in R$. Thus,~$f = c_F \si^{\ell - r}(g)$.

Since $\mathbf{G}$ is a reduced Gr\"obner basis of $I_\ell$, $f$ is the unique polynomial in~$\mathbf{G}$ with minimal degree.
Moreover,~$c_S \si^{l - r}(g)$ is of minimal degree in~$I_\ell$. So it can be reduced to zero by $f$. Thus,~$c_F \mid c_S$.
On the other hand,~$c_S \mid c_F$ by Definition~\ref{DEF:cd}. Thus,~$c_S$ and~$c_F$ are associated to each other. Consequently,~$F$ is a completely desingularized operator for~$L$.
\end{proof}

The construction in the above theorem leads to the following algorithm.
\begin{algo}\label{ALGO:cd}
Given $L \in R[x][\pa]$, where $\pa x = (x + 1) \pa$ or $\pa x = x \pa + 1$,
compute a completely desingularized operator for~$L$.
\begin{enumerate}
 \item[(1)] Compute a basis~$\mathcal{A}$ of~$\cont(L)$ by  Algorithm~\ref{ALGO:cont}.
 \item[(2)] Set $\ell$ to be the highest order of the elements in~$\mathcal{A}$.
            Compute an~$R[x]$-basis $\mathcal{B}$ of $M_\ell$.
 \item[(3)] Set~$\mathcal{B}^\prime = \{ B \in \mathcal{B} \mid \deg_\pa(B)=\ell\}$.
 Compute a reduced Gr\"obner basis $\mathbf{G}$ of
            $\left\langle \left\{ \lc_\pa(B) \mid B \in \mathcal{B}^\prime \right\}  \right\rangle.$
 \item[(4)] Set~$f$ to be the polynomial in~$\mathbf{G}$ whose degree is the lowest one in~$x$.
            Tracing back to the computation of step~3, one can find $u_B\in R[x]$ such that
            $f = \sum_{B \in \mathcal{B}^\prime} u_B \lc_\pa(B).$
 \item[(5)] Output $\sum_{B \in \mathcal{B}^\prime} u_B B$.
\end{enumerate}
\end{algo}
The termination of this algorithm is evident. Its correctness follows from Theorem~\ref{TH:gbcd}.
\begin{example}\label{Example2}
Consider two sequences~$(a_n)_{n \ge 0}$ and~$(b_n)_{n \ge 0}$ satisfying the following two recurrence equations~\cite{George2015}
\[
 n a_n  =  a_{n - 1} + a_{n - 2} \quad \text{and} \quad
 n b_n =   b_{n - 1} + b_{n - 5},
\]
respectively. The sequence~$c_n = n! a_n b_n$ has an annihilator $L \in \bZ[n][\pa]$  with
$$\deg_\pa(L)=10 \,\, \text{and} \,\, \lc_\pa(L) =  (n+10) (n^6+47 n^5+ \cdots + 211696 ).$$
In step~1 of the above algorithm, $\cont(L) = R[x][\pa]\cdot M_{14}$.
In steps~2 and 3, we observe that~$I_{14}$  is generated by~$n {+} 14 $.
In other words, we obtain a completely desingularized operator~$T$ of order~$14$
with~$\lc_\pa(T) = n + 14$. Translating into the recurrence equations of $c_n$, we arrive at
\[ n c_n = \alpha_1 c_{n - 1} + \cdots + \alpha_{14} c_{n - 14},  \]
where $\alpha_i \in \bZ[n]$, $i = 1, \ldots, 14$.
This verifies Krattenthaler's conjecture for the sequences~$a_n$ and~$b_n$.

Note that it is impossible to have a completely desingularized operator of order less than~$14$.
In fact, for some lower orders, one can obtain
\begin{eqnarray*}
\si^{-11}(I_{11}) & = &\langle 11104n, 4n(n - 466), n(n^2-34n+1336) \rangle, \\
\si^{-12}(I_{12}) & = &\langle 4n, n(n - 24) \rangle, \\
\si^{-13}(I_{13}) & = &\langle 2n, n(n - 26) \rangle.
\end{eqnarray*}
They cannot produce a leading coefficient whose degree and content are both minimal.
\end{example}

\begin{example}
Consider the following recurrence equations:
{\small \[ \begin{array}{lll}
n a_n  & = & (31 n - 6) a_{n - 1} + (49 n - 110) a_{n - 2} + (9 n - 225) a_{n - 3}, \\
n b_n  & = & (4 n + 13) b_{n - 1} + (69 n - 122) b_{n - 2} + (36 n - 67) b_{n - 3}.
\end{array} \]}
Let~$c_n = n! a_n b_n$,  which has an annihilator $L \in \bZ[n][\pa]$ of~order $10$ with~$\lc_\pa(L) {=} (n + 9) \alpha$,
where~$\alpha {\in} \bZ[n]$ and~$\deg_{n}(\alpha) {=} 20$.

By the known algorithms for desingularization in~\cite{Abramov1999, Abramov2006, Chen2013, Chen2016}, we find
that~$c_n$ satisfies the recurrence equation
\[ \beta n c_n = \beta_1 c_{n - 1} + \ldots + \beta_{10} c_{n - 10},
\]
where $\beta$ is an 853-digit integer, $\beta_i \in \bZ[n]$, $i = 1, \ldots, 10$.

On the other hand, Algorithm~\ref{ALGO:cd} finds a completely desingularized operator~$T$ for~$L$ of order~$14$ whose
leading coefficient is~$n + 14$. Translating into the recurrence equation of~$c_n$ yields
$n c_n = \gamma_1 c_{n - 1} + \cdots + \gamma_{14} c_{n - 14},$
where $\gamma_i \in \bZ[n]$.
\end{example}

Let~$L \in R[x][\pa]$ with positive order and~$T$ a desingularized operator for~$L$. Then
the degree of~$\lc_\pa(L)$ in~$x$ is equal to
\[  d:= \deg_x\left( \lc_\pa(L) \right) - (k_1 + \cdots + k_m), \]
where~$k_1, \ldots, k_m$ are given in Definition~\ref{DEF:desingularization}. Hence,~$\cont(L)$
cannot contain any  operator whose leading coefficient has degree lower than~$d$.

We provide a lower bound for the content of the leading coefficients of operators in~$\cont(L)$
with respect to the divisibility relation on~$R$.
To this end, we write
\[
 L =  a_k f_{k}(x) \pa^k + a_{k - 1} f_{k - 1}(x) \pa^{k - 1} + \cdots + a_0 f_{0}(x)
\]
where $a_i \in R$ and~$f_{i}(x) \in R[x]$ is primitive, $i = 0, 1, \ldots, k$.
We say that $L$ is \emph{$R$-primitive} if $\gcd(a_0, a_1, \ldots, a_k) = 1$.
Gauss's lemma in the commutative case  also holds for $R$-primitive polynomials.
\begin{lemma} \label{LM:Gauss}
Let~$P$ and $Q$ be two operators in~$R[x][\pa]$.
If~$P$ and~$Q$ are $R$-primitive, so is $P Q$.
\end{lemma}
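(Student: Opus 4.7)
The plan is to mimic the classical proof of Gauss's lemma by reducing modulo a prime of $R$, using the fact that $R$ is central in $R[x][\pa]$, and then invoking that the Ore polynomial ring over a field has no zero divisors.

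First I would observe that $R$ lies in the center of $R[x][\pa]$. Indeed, $\sigma$ is an $R$-automorphism, so $\sigma(r) = r$ for every $r \in R$, while $\delta$ is $R$-linear and $\delta(1) = \sigma(1)\delta(1) + \delta(1)\cdot 1 = 2\delta(1)$ forces $\delta(r) = 0$ for all $r \in R$. Hence the commutation rule yields $\pa r = r \pa$ for $r \in R$. Consequently, for any irreducible element $p \in R$, the left ideal $p\cdot R[x][\pa]$ is a two-sided ideal, and $\sigma, \delta$ descend to a ring automorphism $\bar\sigma$ and a $\bar\sigma$-derivation $\bar\delta$ on $(R/(p))[x]$ (since $\sigma(p) = p$ and $\delta(p) = 0$). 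Thus we obtain a ring epimorphism
\[
  \pi_p \colon R[x][\pa; \sigma, \delta] \longrightarrow (R/(p))[x][\pa; \bar\sigma, \bar\delta].
\]
Because $R$ is a principal ideal domain, $R/(p)$ is a field; so the target is an Ore algebra over a commutative integral domain, and the skew-Leibniz rule gives $\deg_{\pa}(AB) = \deg_{\pa}(A) + \deg_{\pa}(B)$ for nonzero $A, B$ in it, making it an integral domain.

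Next I would establish the following characterization of $R$-primitivity: an operator $L = \sum_{i=0}^k a_i f_i(x) \pa^i \in R[x][\pa]$ (in the normal form with $a_i \in R$ and $f_i \in R[x]$ primitive) is $R$-primitive if and only if $\pi_p(L) \ne 0$ for every irreducible $p \in R$. For this, note that $\pi_p(L) = \sum_i \bar a_i \bar f_i(x) \pa^i$; since $f_i$ is primitive in $R[x]$, some coefficient of $f_i$ is a unit modulo $p$, hence $\bar f_i \ne 0$ in $(R/(p))[x]$. Because $(R/(p))[x]$ is a domain, $\bar a_i \bar f_i = 0$ iff $\bar a_i = 0$. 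Therefore $\pi_p(L) = 0$ iff $p \mid a_i$ for all $i$, i.e.\ iff $p \mid \gcd(a_0, \ldots, a_k)$.

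Combining these ingredients gives the lemma. Suppose $P$ and $Q$ are $R$-primitive. For any irreducible $p \in R$, the characterization gives $\pi_p(P) \ne 0$ and $\pi_p(Q) \ne 0$; since $\pi_p$ is a ring homomorphism into an Ore domain, $\pi_p(PQ) = \pi_p(P)\,\pi_p(Q) \ne 0$. Applying the characterization in the other direction, $PQ$ is $R$-primitive. The only subtle point, which I would take care to verify cleanly, is the descent of $\sigma$ and $\delta$ to $(R/(p))[x]$; once $R$ is shown to be central this is routine, and from there everything reduces to the commutative Gauss's lemma combined with the fact that Ore polynomial rings over a commutative domain are themselves domains.
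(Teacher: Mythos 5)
Your proof plan is correct and follows essentially the same route as the paper's: reduce modulo a prime $p\in R$ and use that the resulting Ore ring over a domain with injective $\sigma$ is itself a domain. The only cosmetic difference is that the paper invokes a cited result of Middeke to obtain the reduction homomorphism $\chi$, while you construct $\pi_p$ directly by observing that $R$ is central (and you should, as the paper does, actually verify that $\bar\sigma$ is injective — e.g.\ via $\sigma^{-1}(p)=p$ — rather than just asserting it is an automorphism).
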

\begin{proof} First, we recall a result in~\cite[Theorem 3.7, Corollary 3.8]{Johannes2011}.
Assume that~$A$ is a ring with endomorphism $\si : A \rightarrow A$ and $\si$-derivation $\delta : A \rightarrow A$.
Let~$I \subseteq A$ be a $\si$-$\delta$-ideal, that is, an ideal such that $\si(I) \subseteq I$ and $\delta(I) \subseteq I$. Then there exists a unique ring
homomorphism
\[
 \chi : A[\pa; \si, \delta] \rightarrow (A/I)[\tilde{\pa}; \tilde{\si}, \tilde{\delta}]
\]
such that $\chi |_{A}: A \rightarrow A/I$ is the canonical homomorphism, and~$\chi(\pa) = \tilde{\pa}$, where $\tilde{\si}$ and
$\tilde{\delta}$ are the homomorphism and $\tilde{\si}$-derivation induced by $\si$ and $\delta$, respectively.


Let $p$ be a prime element of~$R$ and $I = \langle p \rangle$ be the corresponding ideal in $R[x]$.
Then~$I$ is a $\si$-$\delta$-ideal. From the above paragraph, there exists a unique homomorphism
\[
 \chi : R[x][\pa; \si, \delta] \rightarrow (R[x]/I)[\tilde{\pa}; \tilde{\si}, \tilde{\delta}]
\]
such that $\chi |_{R[x]}: R[x] \rightarrow R[x]/I$ is the canonical homomorphism, and~$\chi(\pa) = \tilde{\pa}$.
Note that $\si^{-1}(I)\subset I$, because, for~$p f \in I$ with~$f \in R[x]$,
$\si^{-1}(p f) = p \si^{-1}(f) \in I$. It follows that~$\tilde{\si}$ is an injective endomorphism of $A/I$.
On the other hand, $R[x]/I$ is a domain because~$I$ is a prime ideal.
Thus, $(R[x]/I)[\tilde{\pa}; \tilde{\si}, \tilde{\delta}]$ is a domain because $R[x]/I$ is a domain and $\tilde{\si}$ is injective.
Since $P$ and $L$ are $R$-primitive, we have that~$\chi(P) \chi(L) \neq 0$.  So $\chi(P L) \neq 0 $, because~$\chi$ is a homomorphism.
Consequently, $P L$ is $R$-primitive.
\end{proof}

There are more sophisticated variants of Gauss's lemma for Ore operators in~\cite[Proposition 2]{Kovacic1972} and~\cite[Lemma 9.5]{Zhang2009}.

\begin{theorem}\label{TH:cremovable}
Let~$L \in R[x][\pa]$ with positive order and~$p$ be a non-unit element of $R$.
If~$L$ is $R$-primitive and~$p \mid \lc_{\pa}(L)$, then $p$ is non-removable.
\end{theorem}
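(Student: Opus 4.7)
The plan is to argue by contradiction, combining the normal form for $p$-removing operators supplied by Lemma~\ref{premovable} with Gauss's Lemma (Lemma~\ref{LM:Gauss}). Suppose $p$ is removable from $L$ at some order $k$. Since $p\in R$ is fixed by $\sigma$, so $\sigma^k(p)=p$, and Lemma~\ref{premovable} yields a $p$-removing operator for $L$ of the form
\[
 P \;=\; \frac{p_0}{p^{d_0}} + \frac{p_1}{p^{d_1}}\pa + \cdots + \frac{p_k}{p^{d_k}}\pa^k,
\]
with $p_i\in R[x]$, $\gcd(p_i,p)=1$ in $R[x]$, and $d_k\ge 1$. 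Setting $D:=\max_i d_i\ge 1$, the operator $P':=p^D P = \sum_{i}p^{D-d_i}p_i\,\pa^i$ lies in $R[x][\pa]$, and because $p$ is central in $R[x][\pa]$, $P'L = p^D(PL)\in R[x][\pa]$.

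Next I would extract the $R$-content of $P'$: write $P'=c\,Q$ with $c\in R$ and $Q\in R[x][\pa]$ being $R$-primitive. Since $L$ is $R$-primitive by hypothesis, Lemma~\ref{LM:Gauss} forces $QL$ to be $R$-primitive as well. On the other hand, $c\cdot (QL) = P'L = p^D(PL)$ says that every $R$-coefficient of $c\cdot(QL)$ is divisible by $p^D$; since $QL$ is $R$-primitive, comparing $R$-contents yields that $p^D$ divides $c$ up to a unit of~$R$.

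To finish, it remains to show that $\gcd(c,p)=1$ in $R$, which together with $D\ge 1$ and $p$ non-unit will give the contradiction. Because $p$ is a constant polynomial, any divisor of $p$ in $R[x]$ lies in $R$, and any $R$-element dividing $p_i$ must divide $\operatorname{cont}(p_i)$; hence the assumption $\gcd(p_i,p)=1$ in $R[x]$ is equivalent to $\gcd(\operatorname{cont}(p_i),p)=1$ in $R$. Picking an index $m$ with $d_m=D$, the $\pa^m$-coefficient of $P'$ is precisely $p_m$, so $c$ divides $\operatorname{cont}(p_m)$, giving $\gcd(c,p)\mid\gcd(\operatorname{cont}(p_m),p)=1$. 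Combined with $p^D\mid c$ up to a unit and $D\ge 1$, this is impossible.

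I expect the main technical subtlety to be precisely this passage between coprimality in $R[x]$ and in $R$, namely the equivalence $\gcd(p_i,p)=1$ in $R[x]\iff\gcd(\operatorname{cont}(p_i),p)=1$ in $R$ when $p\in R$; once this is in hand, Lemma~\ref{LM:Gauss} and routine content arithmetic close the argument.
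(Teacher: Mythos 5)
Your proof is correct and follows essentially the same route as the paper: invoke Lemma~\ref{premovable} (with $\sigma^k(p)=p$), clear denominators by $p^{\max_i d_i}$, factor out a content to obtain an $R$-primitive factor, and contradict Lemma~\ref{LM:Gauss}, using that the coefficient $p_m$ with $d_m$ maximal is coprime to $p$. The only cosmetic difference is that you extract the $R$-content of $p^DP$ rather than the full $\gcd(p_0,\dots,p_k)$ in $R[x]$ as the paper does, which changes nothing essential.
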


\begin{proof}
Assume that~$p$ is removable, then there exists a $p$-removing operator~$P$ such that~$PL \in R[x][\pa]$.
By Lemma~\ref{premovable}, we can write
\[
 P = \frac{p_0}{p^{d_0}} + \frac{p_1}{p^{d_1}} \pa + \cdots +
     \frac{p_{k}}{p^{d_{k}}} \pa^{k}
\]
where~$p_i \in R[x]$, $\gcd(p_i, p) = 1$ in $R[x]$, $i = 0, \ldots, k$ and $d_k \geq 1$. Let
$d = \max_{0 \leq i \leq k} d_i$ and $P_1 = p^d P$. Then the content~$c$ of~$P_1$  with respect to~$\pa$ is~$\gcd(p_0, \ldots, p_k)$
because~$\gcd(p_i, p) = 1$, $i = 0, \ldots, k.$
Let~$P_1=cP_2.$ Then~$P_2$ is the primitive part of~$P_1$. In particular,~$P_2$ is $R$-primitive.
Then~$ c P_2 L = p^d PL$.
Since~$\gcd(c, p) = 1$ and~$PL \in R[x][\pa]$,~$p$ divides the content of~$P_2L$ with respect to~$\pa$.
Since~$p$ is a non-unit element of~$R$, $P_2L$ is not $R$-primitive, a contradition to~Lemma~\ref{LM:Gauss}.
\end{proof}

\begin{example}\label{Example1}
In the shift Ore algebra~$\bZ[n][\pa]$, consider a $\bZ$-primitive operator
\begin{eqnarray*}
 L & = & 3 (n+2) (3 n+4) (3 n+5) (7 n+3) \left(25 n^2+21 n+2\right) \\
   &   & \pa^2 + (-58975 n^6-347289 n^5-798121 n^4-902739 n^3 \\
   &   & -519976 n^2-141300 n -13680 ) \pa + 24 (2 n+1) \\
   &   & (4 n+1) (4 n+3) (7 n+10) \left(25 n^2+71 n+48\right),
\end{eqnarray*}
which annihilates~$ \binom{4 n}{n}+ 3^n$.
We observe that~$3$ is a constant factor of~$\lc_{\pa}(L)$. By Theorem~\ref{TH:cremovable},~$3$ is non-removable.
\end{example}

\section{Concluding  remarks}
In this paper, we determine a basis of a contraction ideal defined by an Ore operator in~$R[x][\pa]$,
and compute a completely desingularized operator whose leading coefficient is minimal in terms of both degree and content.
A more challenging topic is to consider the corresponding problems in the multivariate case.

Our algorithms rely heavily on the computation of Gr\"{o}bner bases over a principal ideal domain~$R$.
At present, the computation of Gr\"{o}bner bases
over~$R$ is not fully available in a computer algebra system. So the algorithms in this paper are not yet implemented.
To improve their efficiency, we need to use linear algebra over~$R$ as much as possible.


\section{Acknowledgement}
I am grateful to my supervisors Manuel Kauers and Ziming Li for initiating this research project, stimulating inspirational discussions
and helping me revise the paper.
I thank Shaoshi Chen for suggestions on the proof of Lemma~\ref{LM:Gauss} and valuable information on literatures.

\bibliographystyle{abbrv}
\def\cprime{$'$}

\end{document}